\documentclass{aamas2017}


\pdfpagewidth=8.5truein
\pdfpageheight=11truein


    \setcounter{topnumber}{2}
    \setcounter{bottomnumber}{2}
    \setcounter{totalnumber}{4}     
    \setcounter{dbltopnumber}{2}    

\usepackage{amsfonts,amsmath}
\usepackage{amssymb}
\usepackage{paralist}
\usepackage{multirow}
\usepackage{xspace}
\usepackage{graphicx}
\usepackage{tikz}
\usepackage[textsize=scriptsize]{todonotes}
\usepackage[noend,ruled]{algorithm2e}
\usepackage{color}
\usepackage{comment}
\usepackage{float}
\usepackage{tikz}
\usepackage{mathtools}
\usetikzlibrary{positioning}
\usepackage{bbm}
\usepackage{caption}
\DeclareCaptionType{copyrightbox}
\usepackage{subcaption}
\usepackage{booktabs}
\usepackage[]{units}

\newtheorem{definition}{Definition}

\newtheorem{theorem}{Theorem}
\newtheorem{proposition}[theorem]{Proposition}

\newcommand{\calR}{\mathcal{R}}
\newcommand{\calS}{\mathcal{S}}
\newcommand{\np}{{\mathrm{NP}}}

\newcommand{\fpt}{{\mathrm{FPT}}}

\newcommand{\xp}{{\mathrm{XP}}}

\newcommand{\p}{{\mathrm{P}}}

\newcommand{\commentPF}[1]{{\color{red}(PF: #1)}}

\title{Bribery as a Measure of Candidate Success: \\ Complexity Results
  for Approval-Based Multiwinner Rules}

\sloppy

\numberofauthors{3}
\author{
Piotr Faliszewski \\
       \affaddr{AGH University}\\
       \affaddr{Krakow, Poland}\\
       \email{faliszew@agh.edu.pl}
\alignauthor
Piotr Skowron \\
       \affaddr{TU Berlin}\\
       \affaddr{Berlin, Germany}\\
       \email{p.k.skowron@gmail.com}
\alignauthor
Nimrod Talmon \\
       \affaddr{Weizmann Institute of Science}\\
       \affaddr{Rehovot, Israel}\\
       \email{nimrodtalmon77@gmail.com}
}

\clubpenalty=10000 
\widowpenalty = 10000

\begin{document}

\maketitle

\begin{abstract}
  We study the problem of bribery in multiwinner elections, for the
  case where the voters cast approval ballots (i.e., sets of
  candidates they approve) and the bribery actions are limited to:
  adding an approval to a vote, deleting an approval from a vote, or
  moving an approval within a vote from one candidate to the other.
  We consider a number of approval-based multiwinner rules (AV, SAV,
  GAV, RAV, approval-based Chamberlin--Courant, and PAV). We find the
  landscape of complexity results quite rich, going from
  polynomial-time algorithms through $\np$-hardness with
  constant-factor approximations, to outright
  inapproximability. Moreover, in general, our problems tend to be
  easier when we limit out bribery actions on increasing the number of
  approvals of the candidate that we want to be in a winning committee
  (i.e., adding approvals only for this preferred candidate, or moving
  approvals only to him or her). We also study parameterized complexity
  of our problems, with a focus on parameterizations by 
   the numbers of  voters or candidates.
%
\end{abstract}


\keywords{Multiwinner elections, Bribery, Approval-Based voting}

\section{Introduction}
No one enjoys losing an election. Nonetheless, it is a natural part of
life and instead of drowning in sorrow, a skillful candidate (or, a
rational agent in a multiagent environment) should rather focus on
learning as much as possible from the defeat. In particular, such a
candidate deserves to know how well he or she did in the election and
how close he or she was to winning. In single-winner elections the
candidates typically receive some scores (e.g., in Plurality
elections, the most popular type of single-winner elections, these
scores are the numbers of voters that consider a given candidate as
the best one) and the highest-scoring candidate is a winner. Then
reporting the scores for all the candidates gives them some idea of
their performance.

This score-reporting approach, however, has a number of
drawbacks. First, for some 
rules either there are no natural notions of the score or ones that
exist do not necessarily give a very good idea of a candidate's level
of success. For example, under the single-winner variant of the STV
rule, the voters rank candidates from the best one to the worst one
and we keep on deleting the candidates with the lowest Plurality score
until there is only one left, the winner. On the surface, the rule
does not assign scores to the candidates.  We could, of course, define
the STV score as the round number in which the candidate is
eliminated, but it would not be very useful: Even a tiny change in the
votes can radically change the elimination order (see, e.g., the work
of Woodall~\cite{woo:j:properties}; the effect is also used in the
hardness proofs of manipulation for
STV~\cite{bar-oli:j:polsci:strategic-voting,wal:c:uncertainty-in-preference-elicitation-aggregation}).

Second, this approach is quite problematic to use within multiwinner
elections, where whole committees of candidates are selected. If there
were $m$ candidates and the committee size were $k$, then one would
have to list $m \choose k$ scores, one for each possible
committee. One possible remedy would be to list for each candidate $p$
the score and the contents of the best committee that included
$p$. Unfortunately, this would not address the first issue, which in
multiwinner elections is even more pressing than in single-winner ones
and, more importantly, would not really tell the candidate what
\emph{this candidate's} performance was, but rather would bind him or
her to some committee.

Third, the scores used by some rules may not be sufficiently
informative. For example, in Copeland elections the score of candidate
$c$ is the number of candidates $d$ for whom a
majority of voters ranks $c$ higher than $d$. Yet, no one would claim
that two candidates with the same Copeland score, where one loses his
or her pairwise contests by just a few votes each and the other loses
them by a huge margin, performed similarly.

Finally, the notion of a score may be quite arbitrary. Going back to
the previous example, the Copeland score can be defined so that a
candidate receives $1$ point for winning a pairwise contest, $-1$
point for losing it, and $0$ points for a tie, but one may as well
define it to give $1$ point for a victory, $0$ points for a loss, and
$0.5$ points for a tie. Both approaches are perfectly appropriate and
both are used in the literature, but when used as measures of a
candidate's success, they need to be interpreted quite differently.


To address the issues mentioned above, we propose an approach based on
the \textsc{Bribery} family of problems, introduced by Faliszewski et
al.~\cite{fal-hem-hem:j:bribery} and then studied by a number of other
authors (see the works of Elkind, Faliszewski, and
Slinko~\cite{elk-fal-sli:c:swap-bribery}, Dorn and
Schlotter~\cite{dor-sch:j:parameterized-swap-bribery}, Bredereck et
al.~\cite{bre-fal-nie-tal:c:multiwinner-shift-bribery}, and
Xia~\cite{xia:margin-of-victory} as some examples; we give a more
detailed discussion in Section~\ref{sec:related}). In these problems
we are allowed to perform some actions that modify the votes and we
ask what is the smallest number of such actions that ensure that a
given candidate is a winner of the election. The fewer actions are
necessary for a particular candidate, the better he or she did in the
election (e.g., the winners require no actions at all).

To present our ideas, we focus on approval-based multiwinner
elections. We are interested in multiwinner elections because for them
measuring the performance of losing candidates is far less obvious
than in the single-winner case, and we focus on approval-based rules
(where each voter provides a set of candidates that he or she approves),
as opposed to rules based on preference orders (where each voter ranks
candidates from best to worst), because \textsc{Bribery}-style
problems for preference-order-based rules are already quite
well-studied~\cite{fal-rot:b:control} 
(even in the multiwinner
setting~\cite{bre-fal-nie-tal:c:multiwinner-shift-bribery}).

Let us now describe our setting more precisely. We assume that we are
given an election, i.e., a set of candidates and a collection of
voters (each voter with a set of candidates that he or she approves
of), and a multiwinner voting rule. Multiwinner rules take as input an
election and a committee size $k$, and output a set of $k$ candidates
that form a winning committee (formally, we assume that they output a
set of tied committees, but for now we disregard this issue). Let us
consider one of the simplest approval-based multiwinner rules, namely
\textsc{Approval Voting} (\textsc{AV} for short; we also consider a
number of other rules throughout the paper).
Under \textsc{AV}, a candidate receives a
point for each voter that approves him or her, and the winning
committee consists of the $k$ highest-scoring candidates. Let us
assume that we have four candidates, $a$, $b$, $c$, and $p$, and nine
voters, $v_1, \ldots, v_9$, who approve the following candidates:
\begin{align*}
  v_1 &= \{a,b,c\}, &  v_2 &= \{b,c\}, &  v_3 &= \{a\}, \\
  v_4 &= \{a,b\},   &  v_5 &= \{a,b\}, & v_6 &= \{a,c\},\\
  v_7 &= \{b,c,p\}, &  v_8 &= \{a\},   &  v_9 &= \{a\}.   
\end{align*}
The scores of $a$, $b$, $c$, and $p$, are, respectively, $7$, $5$,
$4$, and $1$.  (See Figure~\ref{fig:example:av} for a graphical
presentation.)  For size two, the winning committee is $\{a,b\}$.
\begin{figure}
\centering
\begin{subfigure}[b]{0.09\textwidth}

\begin{tikzpicture}
\draw (0.05,0) -- (1.5,0);

\draw (0.1,0.00) rectangle (0.4,0.25);
\draw (0.1,0.25) rectangle (0.4,0.50);
\draw (0.1,0.50) rectangle (0.4,0.75);
\draw (0.1,0.75) rectangle (0.4,1.00);
\draw (0.1,1.00) rectangle (0.4,1.25);
\draw (0.1,1.25) rectangle (0.4,1.50);
\draw (0.1,1.50) rectangle (0.4,1.75);
\node [above] at (0.25,-0.4) {$a$};

\draw (0.45,0.00) rectangle (0.75,0.25);
\draw (0.45,0.25) rectangle (0.75,0.50);
\draw (0.45,0.50) rectangle (0.75,0.75);
\draw (0.45,0.75) rectangle (0.75,1.00);
\draw (0.45,1.00) rectangle (0.75,1.25);
\node [above] at (0.6,-0.4) {$b$};

\draw (0.8,0.00) rectangle (1.10,0.25);
\draw (0.8,0.25) rectangle (1.10,0.50);
\draw (0.8,0.50) rectangle (1.10,0.75);
\draw (0.8,0.75) rectangle (1.10,1.00);
\node [above] at (0.95,-0.4) {$c$};

\draw (1.15,0.00) rectangle (1.45,0.25);
\node [above] at (1.30,-0.45) {$p$};
\end{tikzpicture}

\caption{Original}
\end{subfigure}\;\;
\begin{subfigure}[b]{0.09\textwidth}

\begin{tikzpicture}
\draw (0.05,0) -- (1.5,0);

\draw (0.1,0.00) rectangle (0.4,0.25);
\draw (0.1,0.25) rectangle (0.4,0.50);
\draw (0.1,0.50) rectangle (0.4,0.75);
\draw (0.1,0.75) rectangle (0.4,1.00);
\draw (0.1,1.00) rectangle (0.4,1.25);
\draw (0.1,1.25) rectangle (0.4,1.50);
\draw (0.1,1.50) rectangle (0.4,1.75);
\node [above] at (0.25,-0.4) {$a$};

\draw (0.45,0.00) rectangle (0.75,0.25);
\draw (0.45,0.25) rectangle (0.75,0.50);
\draw (0.45,0.50) rectangle (0.75,0.75);
\draw (0.45,0.75) rectangle (0.75,1.00);
\draw (0.45,1.00) rectangle (0.75,1.25);
\node [above] at (0.6,-0.4) {$b$};

\draw (0.8,0.00) rectangle (1.10,0.25);
\draw (0.8,0.25) rectangle (1.10,0.50);
\draw (0.8,0.50) rectangle (1.10,0.75);
\draw (0.8,0.75) rectangle (1.10,1.00);
\node [above] at (0.95,-0.4) {$c$};

\draw (1.15,0.00) rectangle (1.45,0.25);
\draw [very thick] (1.15,0.25) rectangle (1.45,0.50);
\draw [very thick] (1.15,0.50) rectangle (1.45,0.75);
\draw [very thick] (1.15,0.75) rectangle (1.45,1.00);
\draw [very thick] (1.15,1.00) rectangle (1.45,1.25);

\node [above] at (1.30,-0.45) {$p$};
\end{tikzpicture}

\caption{Adding}
\end{subfigure}\;\;
\begin{subfigure}[b]{0.09\textwidth}

\begin{tikzpicture}
\draw (0.05,0) -- (1.5,0);

\draw (0.1,0.00) rectangle (0.4,0.25);
\draw (0.1,0.25) rectangle (0.4,0.50);
\draw (0.1,0.50) rectangle (0.4,0.75);
\draw (0.1,0.75) rectangle (0.4,1.00);
\draw (0.1,1.00) rectangle (0.4,1.25);
\draw (0.1,1.25) rectangle (0.4,1.50);
\draw (0.1,1.50) rectangle (0.4,1.75);
\node [above] at (0.25,-0.4) {$a$};

\draw (0.45,0.00) rectangle (0.75,0.25);
\draw [dotted](0.45,0.25) rectangle (0.75,0.50);
\draw [dotted](0.45,0.50) rectangle (0.75,0.75);
\draw [dotted](0.45,0.75) rectangle (0.75,1.00);
\draw [dotted](0.45,1.00) rectangle (0.75,1.25);
\node [above] at (0.6,-0.4) {$b$};

\draw (0.8,0.00) rectangle (1.10,0.25);
\draw [dotted] (0.8,0.25) rectangle (1.10,0.50);
\draw [dotted](0.8,0.50) rectangle (1.10,0.75);
\draw [dotted](0.8,0.75) rectangle (1.10,1.00);
\node [above] at (0.95,-0.4) {$c$};

\draw (1.15,0.00) rectangle (1.45,0.25);

\node [above] at (1.30,-0.45) {$p$};
\end{tikzpicture}

\caption{Deleting}
\end{subfigure}\;\;
\begin{subfigure}[b]{0.099\textwidth}

\begin{tikzpicture}
\draw (0.05,0) -- (1.5,0);

\draw (0.1,0.00) rectangle (0.4,0.25);
\draw (0.1,0.25) rectangle (0.4,0.50);
\draw (0.1,0.50) rectangle (0.4,0.75);
\draw (0.1,0.75) rectangle (0.4,1.00);
\draw (0.1,1.00) rectangle (0.4,1.25);
\draw (0.1,1.25) rectangle (0.4,1.50);
\draw (0.1,1.50) rectangle (0.4,1.75);
\node [above] at (0.25,-0.4) {$a$};

\draw (0.45,0.00) rectangle (0.75,0.25);
\draw (0.45,0.25) rectangle (0.75,0.50);
\draw (0.45,0.50) rectangle (0.75,0.75);
\draw [dotted](0.45,0.75) rectangle (0.75,1.00);
\draw [dotted](0.45,1.00) rectangle (0.75,1.25);
\node [above] at (0.6,-0.4) {$b$};

\draw (0.8,0.00) rectangle (1.10,0.25);
\draw (0.8,0.25) rectangle (1.10,0.50);
\draw (0.8,0.50) rectangle (1.10,0.75);
\draw [dotted](0.8,0.75) rectangle (1.10,1.00);
\node [above] at (0.95,-0.4) {$c$};

\draw (1.15,0.00) rectangle (1.450,0.25);
\draw [very thick] (1.15,0.25) rectangle (1.45,0.50);
\draw [very thick] (1.15,0.50) rectangle (1.45,0.75);
\draw [very thick] (1.15,0.75) rectangle (1.45,1.00);

\node [above] at (1.30,-0.45) {$p$};
\end{tikzpicture}

\caption{Swapping}
\end{subfigure}

\caption{\label{fig:example:av}Approval scores in the election from
  the introduction, together with an illustration show how $d$ can join
  the size-2 committee by, respectively, adding, deleting, or
  swapping approvals.}
\end{figure}
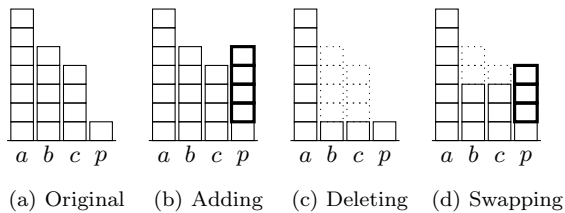
We analyze the performance of $p$ by considering the following three
types of bribery actions:
\begin{description}
\item[Adding Approvals.] In this case, we are allowed to add
  candidates to the voters' sets of approved candidates, paying a unit
  price for each addition.  In our example it suffices to add four
  approvals for $p$ (let us assume that we break ties in favor of
  $p$). Indeed, for the case of \textsc{AV}, this number is the
  difference between the scores of $p$ and the lowest scoring
  committee member.

\item[Deleting Approvals.] In this case we are allowed to remove
  approvals.  In our example one has to remove seven approvals. While
  deleting approvals may not seem an intuitively good measure of
  candidate's performance, in fact it behaves quite interestingly. As
  opposed to 
  adding approvals, not only does it measure how many points a
  candidate is missing to join the committee, but also
  it accounts for the number of committee non-members that did better
  than $p$.

\item[Swapping Approvals.] Here we are allowed to move approvals
  between candidates within each vote. In our example it suffices to
  move three approvals (two from $b$ to $p$, e.g., in the votes $v_1$
  and $v_2$, and one from $c$ to $p$, e.g., in $v_6$). This measure
  seems to be somewhere between the previous two. It takes into
  account the score difference between the lowest-scoring committee
  member and $p$, the number of candidates with scores in between, and
  how the approvals are distributed between the votes (within a single
  vote, we can swap only one approval to $p$).
\end{description}

Indeed,
the above interpretations are particularly easy and natural for
\textsc{AV}, but the numbers of approvals that one has to add, delete,
or swap to ensure a particular candidate's victory are useful measures
for other rules as well.  

Unfortunately, many bribery problems are known to be $\np$-hard. In
this paper we study the complexity of our three variants of bribery
under approval-based rules (\textsc{AddApprovals-Bribery},
\textsc{DeleteApprovals-Bribery}, and \textsc{SwapApprovals-Bribery})
in the following settings: Either each bribery action comes at unit
price (as in the examples above) or each bribery action has a separate
price (this can be used, e.g., to model certain knowledge about some
voters, such as the fact that some voters would never approve our
candidate, or would never delete any approvals), and either we allow
all possible actions, or only those that increase the number of
approvals of our preferred candidate (this restriction does not apply
to the case of deleting approvals). We obtain the following results:
\begin{enumerate}
\item Most of our problems turn out to be $\np$-hard for most of our
  rules (the exceptions include AV in most settings, and GAV and RAV
  when adding unit-cost approvals for the preferred candidate).


\item Problems where bribery actions are focused on the preferred
  candidate tend to be easier than the unrestricted ones (e.g., we
  sometimes obtain $2$-approximation algorithms instead of
  inapproximability results, or $\fpt$ algorithms instead of $\xp$
  ones). Focusing on unit prices has a similar effect.




\item Most of our problems are in $\fpt$
  parameterized either by the number of candidates or the number of voters.
  
\end{enumerate}

Due to space restrictions, we omit many of the proofs (available upon
request). We included proofs that we felt were most illustrative of
the techniques used, and were not based on those already in the
literature. We discuss related literature in
Section~\ref{sec:related}.


\section{Preliminaries}\label{sec:prelim}

An approval-based election $(C,V)$ consists of a set $C$ of candidates
and a collection $V = (v_1, \ldots, v_n)$ of voters.  Each voter has a
set of candidates that he or she approves, and---by a slight abuse of
notation---we refer to these sets through the voters' names (e.g., we
write $v_1 = \{a,b\}$ to indicate that voter $v_1$ approves candidates
$a$ and $b$).

A multiwinner voting rule is a function $\calR$ that given an election
$E = (C,V)$ and a number $k$ ($1 \leq k \leq |C|$) returns a nonempty
family of committees (i.e., size-$k$ subsets of $C$).  We treat each
committee in $\calR(E,k)$ as tied for winning.  (Tie-breaking can
sometimes affect the complexity of voting
problems~\cite{con-rog-xia:c:mle,obr-elk-haz:c:ties-matter,obr-elk:c:random-ties-matter};
our approach is frequently taken in the literature as a simplifying
assumption~\cite{fal-rot:b:control}).\medskip


\noindent\textbf{Approval-Based Rules.}\quad
Let $(C,V)$ be an election and let $k$ be the desired committee
size. Following Aziz et
al.~\cite{azi-gas-gud-mac-mat-wal:c:multiwinner-approval,azi-bri-con-elk-fre-wal:j:justified-representation},
we consider the following six rules (unless we mention otherwise, for
each rule described below there is a simple, natural polynomial-time
winner determination algorithm):


\begin{description}
\item[Approval Voting (AV).] Under the AV rule, the score of each
  candidate is the number of voters that approve him or her. Winning
  committees are those that contain $k$ candidates with highest
  scores. 

\item[Satisfaction Approval Voting (SAV).] Under the SAV rule, each
  voter $v_i$ gives $\frac{1}{|v_i|}$ points to each of his or her
  approved candidates (i.e., each voter is given one point that he or
  she distributes equally among the approved candidates). Winning
  committees consist of $k$ candidates with highest total scores.

\item[Chamberlin--Courant Approval Voting (CCAV).]  We say that a
  voter approves a given committee if he or she approves at least one
  member of this committee. CCAV selects those committees that are
  approved by the largest number of voters.  (One interpretation is
  that voters get representatives from the committee; a voter who
  approves the committee can be represented well).  Unfortunately,
  computing a winning committee under CCAV is
  $\np$-hard~\cite{pro-ros-zoh:j:proportional-representation,bet-sli-uhl:j:mon-cc}.

\item[Greedy Approval Voting (GAV).] GAV was considered by Lu and
  Boutlier~\cite{bou-lu:c:chamberlin-courant} as an approximation
  algorithm for CCAV, but it turned out to be an interesting rule on
  its
  own~\cite{elk-fal-sko-sli:j:multiwinner-properties,azi-bri-con-elk-fre-wal:j:justified-representation}.\footnote{Strictly
    speaking, Lu and Boutilier~\cite{bou-lu:c:chamberlin-courant} and
    Elkind et al.~\cite{elk-fal-sko-sli:j:multiwinner-properties}
    discuss a variant of the algorithm that uses preference orders and
    Borda scores. Nonetheless, their main conclusions transfer to the
    approval-based case.} GAV starts with an empty committee $W$ and
  executes $k$ rounds, where in each round it adds to $W$ a candidate
  that maximizes the number of voters who approve at least one member
  of $W$ (in case of a tie, we assume that there is a fixed
  tie-breaking rule; thus GAV always returns a single committee). If a
  winning committee under CCAV is approved by $\mathrm{OPT}$ voters,
  then the committee produced by GAV is approved by at least
  $(1-\nicefrac{1}{e})\mathrm{OPT}$
  voters~\cite{bou-lu:c:chamberlin-courant}.

\item[Proportional Approval Voting (PAV).] Under PAV, voter $v_i$ assigns to committee $W$ score $\sum_{t=1}^{|W \cap
    v_i|}\frac{1}{t}$. PAV outputs those committees that receive the
  highest total score from the voters.  The rule is $\np$-hard to
  compute~\cite{sko-fal-lan:j:owa,azi-gas-gud-mac-mat-wal:c:multiwinner-approval},
  but---as shown by Aziz et
  al.~\cite{azi-bri-con-elk-fre-wal:j:justified-representation}---satisfies
  strong axiomatic properties, making it well-suited for choosing
  representative bodies (e.g., parliaments, university senates, etc.;
  GAV and CCAV satisfy weaker variants of these properties).

\item[Reweighted Approval Voting (RAV).] RAV relates to PAV in the
  same way as GAV relates to CCAV. It starts with an empty committee
  $W$ and proceeds in $k$ rounds, in each adding to the committee a
  candidate that maximizes the PAV score of the committee. It
  guarantees finding a committee whose score is at least a
  $(1-\nicefrac{1}{e})$ fraction of that of a PAV winning committee.
  
\end{description}

Naturally, the above rules have different strengths and weaknesses,
and should be applied in different
settings~\cite{azi-bri-con-elk-fre-wal:j:justified-representation,elk-fal-sko-sli:j:multiwinner-properties,sko-fal-lan:j:owa}.
We provide more pointers regarding their properties and history in
Section~\ref{sec:related}.

\begin{table*}[t]

\newcommand{\twocol}[1]{\multicolumn{2}{c|}{#1}}
\newcommand{\twocoln}[1]{\multicolumn{2}{c}{#1}}
\newcommand{\tabp}{\multicolumn{2}{c|}{\multirow{4}{*}{$\p$}}}
\small
\centering
\begin{tabular}{c|cc|cc|cc|cc|cc}
\toprule[0.75pt]
   \scriptsize\bf operation
   & \multicolumn{4}{c|}{\bf Adding Approvals} &  \multicolumn{2}{c|}{\bf Deleting Approvals} & \multicolumn{4}{c}{\bf Swapping Approvals} \\
   \scriptsize\bf restriction
   & \multicolumn{2}{c|}{\scriptsize\bf (for $\boldsymbol{p}$)} & \multicolumn{2}{c|}{\scriptsize\bf (none)}  
   & \multicolumn{1}{c}{\scriptsize\bf } & \multicolumn{1}{c|}{\scriptsize\bf }  
   & \multicolumn{2}{c|}{\scriptsize\bf (for $\boldsymbol{p}$)} & \multicolumn{2}{c}{\scriptsize\bf (none)}  \\

   \scriptsize\bf {prices}
   & {\scriptsize\bf (unit)} & \scriptsize\bf (any) 
   & {\scriptsize\bf (unit)} & \scriptsize\bf (any) 
   & {\scriptsize\bf (unit)} & \scriptsize\bf (any) 
   & {\scriptsize\bf (unit)} & \scriptsize\bf (any) 
   & {\scriptsize\bf (unit)} & \scriptsize\bf (any) \\
   \midrule[0.75pt]
   \multirow{4}{*}{AV}
       & \tabp& \tabp & \tabp & \multirow{4}{*}{$\p$} & $\np$-hard & \multirow{4}{*}{$\p$} & $\np$-hard  \\
       &   &  &   &   &   &   &                       & inapprox.&                       & inapprox. \\
       &   &  &   &   &   &   &                       & $\fpt(m)$  &                       & $\fpt(m)$ \\
       &   &  &   &   &   &   &                       & $\fpt(n)$  &                       & $\xp(n)$ \\

   \midrule
   \multirow{4}{*}{SAV}  & \twocol{$\np$-hard}  & \twocol{$\np$-hard}      & \twocol{$\np$-hard} & \twocol{$\np$-hard} & \twocoln{$\np$-hard} \\
                         & \twocol{$2$-approx.} & $2$-approx. & inapprox.  & \twocol{inapprox.}  & \twocol{inapprox.}  & ? & inapprox.  \\
                         & \twocol{$\fpt(m)$}   & \twocol{$\fpt(m)$}       & \twocol{$\fpt(m)$}  & $\fpt(m)$   & $\xp(m)$       & $\fpt(m)$   & $\xp(m)$   \\ 
                         & \twocol{$\fpt(n)$}   & $\fpt(n)$ & $\xp(n)$        & \twocol{$\xp(n)$}      & \twocol{$\fpt(n)$}        & $\fpt(n)$ & $\xp(n)$             \\
   \midrule
   \multirow{4}{*}{GAV}  & \tabp      & \twocol{$\np$-hard}         & \twocol{$\np$-hard}  & \twocol{$\np$-hard} & \twocoln{$\np$-hard} \\
                         &                      &                   & \twocol{inapprox.}  & \twocol{inapprox.}   & \twocol{inapprox.}  & \twocoln{inapprox.}  \\
                         &                      &                   & \twocol{$\fpt(m)$}  & \twocol{$\fpt(m)$}   & $\fpt(m)$      & $\xp(m)$  & $\fpt(m)$ & $\xp(m)$  \\ 
                         &                      &                   & \twocol{$\fpt(n)$}  & \twocol{$\fpt(n)$}   & \twocol{$\fpt(n)$}  & $\fpt(n)$ & $\xp(n)$        \\
   \midrule
   \multirow{4}{*}{RAV}  & \multirow{4}{*}{$\p$}& ?                 & \twocol{$\np$-hard} & \twocol{$\np$-hard}  & \twocol{$\np$-hard} & \twocoln{$\np$-hard} \\
                         &                      & PTAS              & \twocol{inapprox.}  & \twocol{inapprox.}   & \twocol{inapprox.}  & \twocoln{inapprox.}  \\
                         &                      & {$\fpt(m)$}       & \twocol{$\fpt(m)$}  & \twocol{$\fpt(m)$}   & $\fpt(m)$     & $\xp(m)$   & $\fpt(m)$ & $\xp(m)$  \\ 
                         &                      & {$\fpt(n)$}       & $\fpt(n)$  & $\xp(n)$      & \twocol{$\xp(n)$}       & \twocol{$\fpt(n)$}      & $\fpt(n)$ & $\xp(n)$          \\


   \midrule[0.75pt]
   \multirow{2}{*}{CCAV}  & \twocol{$\fpt(m)$}& \twocol{$\fpt(m)$}& \twocol{$\fpt(m)$} & $\fpt(m)$   & $\xp(m)$ & $\fpt(m)$ & $\xp(m)$\\
                           & \twocol{$\fpt(n)$}& \twocol{$\fpt(n)$}& \twocol{$\fpt(n)$}& \twocol{$\fpt(n)$} & $\fpt(n)$ & $\xp(n)$ \\
   \midrule
   \multirow{2}{*}{PAV}    & \twocol{$\fpt(m)$} & \twocol{$\fpt(m)$} & \twocol{$\fpt(m)$}  & $\fpt(m)$   & $\xp(m)$  & $\fpt(m)$ & $\xp(m)$\\
                           & \twocol{$\fpt(n)$}& $\fpt(n)$ & $\xp(n)$     & \twocol{$\xp(n)$}     & \twocol{$\fpt(n)$}     & $\fpt(n)$ & $\xp(n)$     \\

\bottomrule[0.75pt]

\end{tabular}
\caption{\label{tab:results}Results for all our rules and all variants of the problems. 
For each rule and each scenario we report four entries: 
  (1) is the problem in $\p$ or is it $\np$-hard, 
  (2) what is the best known approximation algorithm, 
  (3 and 4) what are the best known parameterized algorithm for parameterization by 
     the number of candidates ($m$) and the number of voters ($n$), respectively.
  Each cell in the table is divided into two columns, one for the unpriced variant of the 
  problem and one for the the priced variant. When a  result for both columns is the same, 
  we write it in the middle of the cell.}

\end{table*}


\medskip
\noindent\textbf{Bribery Problems}.\quad
We are interested in bribery problems where we can perform the
following types of operations:
\begin{itemize}
\item[\textsc{AddApprovals}:] A single operation means adding an
  approval for a given candidate in a given vote.
\item[\textsc{DeleteApprovals}:] A single operation means removing an
  approval from a given candidate in a given vote.
\item[\textsc{SwapApprovals}:] A single operation means moving an
  approval from a given candidate in a given vote to another
  candidate---originally not approved---within the same vote.
\end{itemize}

In the basic variant of our problems, each operation comes with the
same, unit price.
\begin{definition}\label{def:problem}
  Let $\calR$ be an approval-based multiwinner rule and let
  \textsc{Op} be one of \textsc{AddApprovals},
  \textsc{DeleteApprovals}, or \textsc{SwapApprovals}. In the
  $\calR$-\textsc{Op-Bribery} problem, we are given an election
  $(C,V)$, a preferred candidate $p \in C$, and two integers, the
  committee size $k$ and the budget $b$. We ask whether it is possible
  to ensure that $p$ belongs to at least one $\calR$-winning committee
  of size $k$ by applying at most $b$ operations of type \textsc{Op}
  to election $(C, V)$.
\end{definition}

We follow Bredereck et
al.~\cite{bre-fal-nie-tal:c:multiwinner-shift-bribery} in that it
suffices for $p$ to belong to just one of the winning committees.
(The approach where $p$ should belong to every winning committee---as
in the work of Meir et
al.~\cite{mei-pro-ros-zoh:j:multiwinner}---would be as natural.)

While Definition~\ref{def:problem} gives the baseline variants of our
problems, we also consider two modifications. In the priced variant
(denoted by operations \textsc{\$AddApprovals},
\textsc{\$DeleteApprovals}, and \textsc{\$SwapApprovals}), we assume
that each possible operation comes with a distinct price (that depends
both on the voter and on the candidate(s) to which it applies; e.g.,
adding an approval for $p$ to some vote $v$ could cost $10$ units,
whereas adding an approval for some other candidate $c$ to the same vote
$v$ could cost $2$ units). That is, our problems are closer to
\textsc{Swap Bribery} and \textsc{Shift Bribery} of Elkind et
al.~\cite{elk-fal-sli:c:swap-bribery,elk-fal:c:shift-bribery} than to
\textsc{Bribery} and \textsc{\$Bribery} of Faliszewski et
al.~\cite{fal-hem-hem:j:bribery} (where upon paying a voter's price,
one can modify the vote arbitrarily).

We also distinguish variants of the \textsc{(\$)AddApprovals} and
\textsc{(\$)SwapApprovals} operations where one is limited to,
respectively, adding approvals only for $p$ or swapping approvals only
to $p$. These problems model natural, positive scenarios, where we
want to find out what support candidate $p$ should have garnered to
win the election.

\section{Results}

For all our rules and settings, we seek results of three kinds.
First, we check whether the problem is polynomial-time solvable (few
rare cases) or is $\np$-hard (typical). Then, to deal with
$\np$-hardness, we seek approximation and $\fpt$ algorithms.
Unfortunately, in most cases we show that our problems are hard to
approximate in polynomial time within any constant factor.  For the
case of parameterized complexity, we show that all of our problems are
fixed-parameter tractable (in $\fpt$), provided that we consider unit
prices and take as the parameter either the number of candidates or
the number of voters. For the case of priced elections, we still get a
fairly comprehensive set of $\fpt$ algorithms, but we do miss some
cases (and we resort to $\xp$ algorithms then; nonetheless, we
strongly believe that new proof techniques would lead to $\fpt$
results for all our cases).

We summarize our results in Table~\ref{tab:results}.  Below we first
study our polynomial-time computable rules (AV, SAV, GAV, and RAV),
for which we prove $\p$-membership, $\np$-hardness, and
(in)approximability results, and then move on to parameterized
complexity, where we consider all the rules.

\subsection{The Easy Case: Approval Voting}

For AV, almost all our problems can be solved in polynomial time using
simple greedy algorithms.

\begin{theorem}
  Let \textsc{Op} be one of \textsc{(\$)AddApprovals},
  \textsc{(\$)DeleteApprovals}, and
  \textsc{SwapApprovals}. \textsc{AV-Op-Bribery} is in $\p$ (also for
  the cases where we can add/swap approvals only to $p$).
\end{theorem}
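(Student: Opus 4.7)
Throughout, write $s(c)$ for the AV score of candidate $c$. The cornerstone observation is that, since the $k$-th highest score determines committee membership, $p$ belongs to some winning committee if and only if at most $k-1$ candidates $c \neq p$ satisfy $s(c) > s(p)$ (we assume tie-breaking in favor of $p$, as permitted by the paper's conventions). For \textsc{(\$)AddApprovals}, I would first argue that adding an approval for any $c \neq p$ strictly increases $s(c)$ without touching $s(p)$, and so can never help; hence every optimal bribery targets $p$ alone, which makes the unrestricted variant identical to the $p$-restricted one. The algorithm then iterates the number of added approvals $b$ over $\{0,\dots,|\{v : p \notin v\}|\}$: for each $b$, the feasibility check is whether at most $k-1$ candidates have $s(c) > s(p)+b$; in the unit-price case I return the smallest feasible $b$, and in the priced case I greedily buy the $b$ cheapest voters (among those not approving $p$) and minimize the resulting total over all feasible $b$.

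For \textsc{(\$)DeleteApprovals}, symmetrically, deleting approvals from $p$ can only hurt, so $s(p)$ stays fixed and the task is to bring down enough members of $R=\{c:s(c)>s(p)\}$. If $|R|\leq k-1$ we are done for free; otherwise exactly $|R|-(k-1)$ members of $R$ must be reduced to score at most $s(p)$, and the choices for different candidates are cost-independent because deletions act on disjoint voter--candidate slots. For each $c\in R$ I would precompute $D(c)$, the minimum cost of reducing $c$ to score $s(p)$: under unit prices $D(c)=s(c)-s(p)$, and under individual prices $D(c)$ is the sum of the $s(c)-s(p)$ cheapest deletion costs among voters approving $c$. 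The optimum is then the sum of the $|R|-(k-1)$ smallest $D(c)$ values, equivalently leaving as ``exceptions'' the $k-1$ candidates that are most costly to push down.

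The interesting case is \textsc{SwapApprovals} (unit prices), since each swap couples a $-1$ on one candidate with a $+1$ on another, and in the unrestricted variant we may perform swaps that bypass $p$ entirely. My approach is to enumerate the final score $S=s(p)+a$ of $p$, where $a$ is the number of swaps into $p$ (bounded by $|\{v:p\notin v\}|$); in the unrestricted variant I additionally enumerate the number $b$ of side swaps (those that bypass $p$), giving total cost $a+b$. For each fixed choice, feasibility reduces to a polynomial-time max-flow with upper and lower bounds: source-to-voter edges of capacity $1$ model each voter's single swap-into-$p$ slot, voter-to-candidate edges of capacity $1$ model the candidates each voter currently approves, every candidate $c$ with $s(c)>S$ that is not among the $k-1$ allowed ``exceptions'' imposes a lower bound of $s(c)-S$ on its incoming reductions, and in the unrestricted variant the side-swap arcs are constrained so that no non-exception candidate receives so many increases that its final score exceeds $S$. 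I would fix the exceptions to be the $k-1$ candidates that are most costly to push down, justified by a standard exchange argument on any optimal flow. Since $(a,b)$ ranges over polynomially many values and each bounded flow is solvable in polynomial time, the whole algorithm is polynomial. The main obstacle will be the WLOG choice of the exception set together with the bookkeeping that side-swap destinations never turn into new competitors, but both reduce to routine flow manipulations.
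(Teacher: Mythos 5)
Your treatment of \textsc{(\$)AddApprovals} and \textsc{(\$)DeleteApprovals} is correct and essentially identical to the paper's: adding for $c\neq p$ never helps, so one greedily buys the cheapest approvals for $p$; deletions never touch $p$, the per-candidate costs $D(c)$ of ``bringing $c$ down'' to $s(p)$ are independent, and one pays for the $|R|-(k-1)$ cheapest of them. Where you diverge is \textsc{SwapApprovals}. The paper also guesses $p$'s final score $T$ and also fixes the $k-1$ highest-scoring candidates as the exceptions, but it then runs a direct greedy that performs \emph{only} swaps into $p$: whenever some non-exception candidate $c$ still has more approvals than $p$ and more than $T$, a vote approving $c$ but not $p$ exists by pigeonhole, so one swap $c\to p$ simultaneously buys one unit of reduction for $c$ and one unit of increase for $p$. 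The underlying (unstated by you) domination argument is that under unit prices a side swap $c\to d$ is never better than a swap $c\to p$ --- both cost one and reduce $c$ by one, but raising $p$ never hurts while raising $d$ might --- so side swaps can be eliminated outright. You instead keep side swaps, enumerate their number $b$, and encode feasibility as a flow with lower bounds. This is workable in principle and arguably more explicit about optimality than the paper's greedy, but it is also the one place where your proposal is genuinely under-specified: the constraint that ``side-swap destinations never turn into new competitors'' couples a per-vote restriction (the destination must be originally unapproved in that vote, and each vote--destination pair can absorb at most one incoming approval) with a global cap on each destination's final score, and it is not obvious that this packs into a single polynomial-size flow rather than requiring further guessing. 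You would either need to spell that gadget out or, better, first prove the domination lemma above and drop side swaps, at which point your flow collapses to the paper's greedy. Note also that your ``$k-1$ most costly to push down'' exception set should be stated as the $k-1$ highest AV scores (these coincide only because prices are unit), and that the exchange argument for it does need the observation that a reduction of a lower-scoring replacement candidate can still always be routed into $p$.
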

\begin{proof}[sketch] 
  Let $(C,V)$ be the input election, $p$ be the preferred candidate,
  $k$ be the committee size, and $b$ be the budget.  For the case of
  (priced) bribery by adding approvals, it suffices to keep on adding
  approvals for $p$ 
  in the order of nondecreasing price of this operation, until either
  $p$ becomes a member of some winning committee or we exceed the
  budget (adding approvals for others is never beneficial). 

  For the case of (priced) bribery by deleting approvals, if $p$ is
  not a winner already then we proceed as follows. Let $C'$ be the set
  of candidates that have more approvals than~$p$. By ``bringing a
  candidate $c \in C'$ down'' we mean the cheapest sequence of
  approval-deletions that ensures that $c$ has the same number of
  approvals as $p$ has (we refer to the total cost of this sequence as
  the cost of bringing $c$ down). We keep on bringing candidates from
  $C'$ down (in the order of nondecreasing cost of this operation)
  until 
  $p$ becomes a member of some winning committee or we exceed the
  budget.

  For the case of (unpriced) bribery by swapping approvals, we first
  guess a threshold $T$ ($0 \leq T \leq |V|$) and then repeat the
  following steps until either $p$ belongs to some winning committee
  or we exceed the budget (if we exceed the budget for every choice of
  $T$, then we reject): We let $C'$ be the set of candidates who have
  more approvals than $p$, except the $k-1$ candidates approved by
  most voters (with ties broken arbitrarily, but in the same way in
  each iteration; this works since we consider unit prices).  We
  remove from $C'$ those candidates who are approved by at most $T$
  voters. Then, if $C'$ is nonempty, we move an approval from some $c
  \in C'$ to $p$ (there is a vote where it is possible because $c$ has
  more approvals than $p$). If $C'$ is empty, then we move an approval
  to $p$ from some arbitrarily chosen candidate in some arbitrarily
  chosen vote.  Intuitively, in this algorithm we guess the score $T$
  that we promise $p$ will have upon entering the winning committee,
  and we keep on moving approvals from ``the most fragile'' opponents
  to $p$, so their scores drop to $T$, whereas $p$'s score increases
  to $T$.
%
\end{proof}

Unfortunately, AV-\textsc{\$SwapApprovals-Bribery} is $\np$-hard and
hard to approximate within any constant factor. This hardness comes
from the fact that when swaps have prices, then it does not suffice to
simply know that there will be \emph{some swap} to perform (as in the
algorithm above) and one cheap swap may prevent another, more useful,
one.


\begin{theorem}\label{theorem:av_dollar_swap}
  \textsc{AV-$\$$SwapApprovals-Bribery} is $\np$-hard, even if we are
  allowed to swap approvals to the preferred candidate only.
\end{theorem}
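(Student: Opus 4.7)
The plan is to prove $\np$-hardness by many-one reduction from an appropriate $\np$-complete combinatorial problem---\textsc{Vertex Cover} or \textsc{Set Cover} are natural candidates. The source of hardness must lie in the non-uniform pricing, since the proof of the preceding theorem shows that with unit prices the problem is in $\p$: that argument works because, once the threshold $T$ that $p$ promises to reach is fixed, the algorithm is free to pick any single cheap swap to move an approval from ``the most fragile'' opponent to $p$; with arbitrary prices the choice of which voter supplies each swap becomes a nontrivial combinatorial decision, and this is exactly what we aim to exploit.

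Given a \textsc{Vertex Cover} instance $(G=(V,E),k)$, I would build an election with $p$, one candidate $c_v$ per vertex, one candidate $d_e$ per edge, and dummy candidates to fine-tune the committee. All $c_v$ and $d_e$ are padded to a common initial score $D$ via padding voters whose swap prices strictly exceed the budget and are therefore effectively unusable.  The ``active'' voters encode the incidence structure of $G$: for each edge $e=\{u,v\}$ I add voters approving pairs like $\{c_u,d_e\}$ and $\{c_v,d_e\}$, and the prices on the swaps in these voters are set so that lowering $d_e$ cheaply is possible only as a ``byproduct'' of a swap that also contributes to lowering an incident vertex candidate.  Because each voter admits only one swap to $p$ (once $p$ is approved in a vote, no further swaps to $p$ are possible there), this structure forces a feasible bribery to assign each edge to one of its endpoints, which in turn corresponds to a vertex cover.

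Choosing the committee size $K$ and budget $b$ so that $p$ enters a winning committee exactly when every $d_e$'s score is brought below $p$'s final score $T$---and so that this is achievable using at most $k$ distinct cover vertices---gives the desired correspondence. The forward direction is direct: a vertex cover $C$ of size $k$ yields an explicit bribery scheme of cost $b$, where for each edge $e=\{u,v\}$ with (say) $u\in C$ we lower $d_e$ using the voter $y_e^u$ and also lower $c_u$ via the remaining incident structure.  The backward direction uses a cost-counting argument: any feasible bribery of cost at most $b$ must, by the designed pricing and the one-swap-per-voter cap, induce an assignment of edges to endpoints that forms a valid cover of size at most $k$.

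The main obstacle is engineering the prices so tightly that no ``cheating'' strategy---for instance, lowering a $d_e$ via an unintended voter pair, using a padding voter to boost $p$'s score on the cheap, or repeatedly lowering the same candidate---can undercut the cost of the intended cover-encoded strategy.  Since per-swap prices cannot directly enforce combinatorial atomicity among swaps in different voters (one cannot straightforwardly say ``if you use $y_e^u$ at all, you must also use $y_e^v$''), the argument must rely on a careful interplay between the padded initial score $D$, the committee-size threshold, and the precise cost differentials between ``intended'' and ``unintended'' swaps.  I expect this delicate price-engineering, and the tight accounting needed to rule out cheaper deviations, to be the technical crux of the proof.
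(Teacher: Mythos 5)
Your proposal is a plan rather than a proof, and the one concrete mechanism it commits to does not work as described. You write that the prices on the edge voters should be set ``so that lowering $d_e$ cheaply is possible only as a byproduct of a swap that also contributes to lowering an incident vertex candidate.'' But a single \textsc{SwapApprovals} operation moves one approval from one candidate to $p$; in a voter approving $\{c_u, d_e\}$ a swap to $p$ lowers \emph{either} $c_u$ \emph{or} $d_e$, never both, so no pricing of individual swaps can create the coupling you need. You then correctly observe that per-swap prices cannot enforce ``if you use $y_e^u$ you must also use $y_e^v$,'' and you defer the resolution of this to unspecified ``delicate price-engineering.'' That engineering is precisely the content of the theorem, so as it stands the reduction has no verified gadget and neither direction of the correspondence with \textsc{Vertex Cover} can be checked.

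For comparison, the paper's proof avoids the coupling problem entirely by reducing from \textsc{Independent Set} on cubic graphs and using the budget, not the prices, as the binding constraint. There is one candidate $c_v$ per vertex, one unit-cost voter per edge approving its two endpoints, and $3h$ padding voters (approving all vertex candidates) whose swaps are priced above the budget $b=3h$. Every vertex candidate starts with score $3+3h$ and $p$ with score $0$; with committee size $n-h+1$, candidate $p$ must reach score $3h$, which forces all $3h$ affordable swaps to go to $p$, one per edge voter. Hence $h$ candidates must each lose $3$ approvals, and since each vertex has degree exactly $3$ and each edge voter can donate only one approval to $p$, those $h$ vertices must be pairwise non-adjacent. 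The prices play only the coarse role of disabling the padding voters; the combinatorial force comes from the exact budget accounting. If you want to salvage your own route, you would need to replace the ``byproduct'' gadget with some mechanism of this kind --- e.g., a tight score/budget argument that forces one swap per edge voter and counts how many approvals each vertex candidate must lose --- rather than hoping the prices alone can encode the cover condition.
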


\begin{proof}
  We reduce from the \textsc{Independent Set} problem, where we are
  given a graph $G$ and an integer $h$, and we ask if there is a set
  of $h$ pairwise non-adjacent vertices in $G$. \textsc{Independent
    Set} is known to be NP-hard even on cubic graphs, i.e., graphs
  with vertices of degree three~\cite{GJ79}.

  Let $(G,h)$ be an instance of \textsc{Independent Set}, where $G$ is
  a cubic graph with $n$ vertices. We construct an instance for
  \textsc{AV-$\$$SwapApprovals-Bribery}, as follows.  We let the
  candidate set be $C = \{p\} \cup \{c_v \mid v$ is a vertex of $G\}$,
  where $p$ is the preferred candidate.  For each edge $e = \{u, v\}$
  in $G$, we introduce a voter $v_e$ who approves the candidates $c_u$
  and $c_v$. For each of these edge voters, each approval swap has
  unit cost. We introduce further $3h$ voters, each approving all the
  vertex candidates; all the swaps for these voters cost $3h+1$.
  Finally, we set the committee size to $k = n-h+1$ and the
  budget to $b = 3h$. This completes the construction which
  can be computed in polynomial time.

  Prior to any approval swaps, $p$ has score zero and every other
  candidate has score $3+3h$ (each vertex touches three edges, and we
  get $3h$ points from the second group of voters).

  If there is a set $\mathit{IS}$ of $h$ pairwise non-adjacent
  vertices of $G$, then we can ensure that $p$ belongs to some winning
  committee: It suffices that for each vertex $v \in \mathit{IS}$, we
  move the approval from $c_v$ to $p$ for the three edge voters that
  correspond to the edges touching $v$ (this is possible because
  $\mathit{IS}$ is an independent set). As a result, $p$'s score
  increases to $3h$, the scores of the $h$ candidates corresponding to
  the vertices from $\mathit{IS}$ drop to $3h$, and so $C \setminus \{
  c_v \mid v \in \mathit{IS}\}$ is a winning committee (and contains
  $p$).


  For the other direction, note that
  (1) the score needed for $p$ is $3h$,
  (2) this score is achieved only if we swap for $p$ in each swap operation,
  and
  (3) if $p$ is to be a member of some winning committee then
      at least $h$ candidates have to lose at least three approvals each.
  It follows that these $h$ candidates have to form an
  independent set because otherwise we would not be able to perform
  all the approval swaps.
\end{proof}

Inapproximability results follow by similar proofs.

\begin{theorem}
  For each $\alpha > 1$, if $\p \neq \np$ then there is no
  polynomial-time $\alpha$-approximation algorithm for
  \textsc{AV-$\$$SwapApprovals-Bribery} (even if we focus on swapping
  approvals to $p$ only).
\end{theorem}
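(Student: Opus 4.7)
The plan is to re-use the reduction from \textsc{Independent Set} on cubic graphs given in the proof of Theorem~\ref{theorem:av_dollar_swap}, but to amplify the cost gap between the ``cheap'' and ``expensive'' swaps so that a YES instance can be solved for cost exactly $3h$, while any feasible bribery for a NO instance costs strictly more than $\alpha\cdot 3h$. Concretely, given a constant $\alpha>1$ and an instance $(G,h)$, I would build the same election as before (candidates $\{p\}\cup\{c_v\}$, one edge voter per edge, $3h$ second-group voters, committee size $k=n-h+1$, budget $b=\alpha\cdot 3h$), but I would replace the price $3h+1$ of the swaps in the second group of voters by $C = \alpha\cdot 3h + 1$. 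Since $\alpha$ is fixed, $C$ is polynomial in the input size, so the reduction runs in polynomial time.

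For the completeness direction, if $G$ admits an independent set of size $h$, then exactly the same sequence of $3h$ unit-cost edge-voter swaps as in the proof of Theorem~\ref{theorem:av_dollar_swap} places $p$ in a winning committee. The cost is $3h$, which is at most $\alpha\cdot 3h$, so any $\alpha$-approximation algorithm would return a solution of cost at most $\alpha\cdot 3h$.

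For the soundness direction, I would redo the counting argument from Theorem~\ref{theorem:av_dollar_swap} in a slightly sharper form. In any feasible solution, $p$ must reach score $3h$, and because each swap is constrained to go to $p$ and each voter can host at most one swap whose target is $p$ (as $p$ is already approved after the first such swap), at least $3h$ distinct voters must each perform one swap to $p$. Furthermore, $h$ vertex candidates must each lose at least three approvals, and edge voters can only contribute a swap for a single one of their two approved endpoints. A quick counting shows that if the set $D$ of ``dropped'' vertices contains an internal edge, then edge voters alone supply at most $3h - |E(G[D])| < 3h$ of the required swaps, so at least $|E(G[D])| \geq 1$ swaps must be performed by second-group voters, each at cost $C$. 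Therefore any feasible bribery in the NO case costs at least $(3h-1)\cdot 1 + C = 3h - 1 + \alpha\cdot 3h + 1 > \alpha\cdot 3h$.

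Combining the two bounds, an $\alpha$-approximation algorithm for \textsc{AV-\$SwapApprovals-Bribery} would return a solution of cost at most $\alpha\cdot 3h$ exactly when $G$ has an independent set of size $h$, and would fail to do so otherwise, contradicting $\p\neq\np$. The main subtlety, and the step I would carry out most carefully, is the sharpened NO-case lower bound: one has to argue that the ``at most one swap to $p$ per voter'' constraint forces at least one swap to come from the second group whenever $D$ is not independent, and that this holds for \emph{any} feasible solution, not just a minimum one (the latter is immediate because adding more swaps only increases the cost). The restriction to swapping only to $p$ in the theorem statement is automatically inherited from the reduction, since the construction already uses only such swaps.
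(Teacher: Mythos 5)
There is a genuine gap in the soundness direction, and it is not where you locate it. Your NO-case analysis starts from the premise that ``in any feasible solution, $p$ must reach score $3h$'' and that $h$ vertex candidates must each lose three approvals. Both facts are true in the decision reduction of Theorem~\ref{theorem:av_dollar_swap}, but only because the budget there is exactly $3h$: with at most $3h$ unit swaps, $p$'s score is capped at $3h$, which forces each of the $h$ candidates that must fall to $p$'s level to shed all three of its edge-voter approvals. Once the budget is relaxed to $\alpha\cdot 3h$ (equivalently, once you argue about arbitrary feasible solutions in the optimization version), this cap disappears, and there is a cheap feasible solution that ignores the independent-set structure entirely: pick any $3h+3$ edge voters (there are $3n/2$ of them, and the $\np$-hard instances have $h<n/2$, so enough exist) and swap one approval to $p$ in each. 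Then $p$'s score is $3h+3$, every vertex candidate's score is at most $3h+3$ (they start at $3h+3$ and swaps to $p$ only remove approvals), so no candidate strictly beats $p$ and $p$ belongs to a winning committee. This costs $3h+3$ and uses only unit-cost swaps to $p$, so it is feasible in the restricted variant and never touches the expensive second-group voters --- raising their price to $\alpha\cdot 3h+1$ therefore buys you nothing. The resulting YES/NO gap is at most $(3h+3)/3h = 1+1/h$, which tends to $1$, so no constant-factor inapproximability follows from this construction.

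The underlying issue is that for gap reductions of this kind one needs the property that NO-instances admit \emph{no} feasible solution of cost at most $\alpha$ times the YES-cost, and the Independent Set gadget, as is, cannot provide it: the ``target score'' of $p$ and the ``base score'' of the vertex candidates are tied together at $3h$, and a few extra cheap swaps let $p$ overshoot the base score and win unconditionally. A correct proof needs a construction in which overshooting is blocked --- for instance by padding the election with many additional voters and high-scoring dummy candidates so that even $\alpha$ times the budget cannot raise $p$ past the relevant thresholds unless the combinatorial structure (cover, independent set) exists. This is exactly the role played by the parameter $N=27(\alpha n+1)$, the dummy candidates $D$, and the voter blocks $V'$ and $V''$ in the paper's inapproximability proof for \textsc{SAV-(\$)SwapApprovals-Bribery}; an analogous $\alpha$-dependent padding is what is missing from your reduction.
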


\subsection{Chance for Approximation: SAV}

On the surface, SAV is very similar to AV. Yet, the fact that adding
or deleting a single approval can affect many candidates at the same
time (by decreasing or increasing their share of a voter's point) can
be leveraged to show $\np$-hardness of all our problems.

\begin{theorem}
  Let \textsc{Op} be one of \textsc{(\$)AddApprovals},
  \textsc{(\$)DeleteApprovals}, and
  \textsc{(\$)SwapApprovals}. \textsc{SAV-Op-Bribery} is $\np$-hard
  (also for the cases where we can add/swap approvals only to $p$).
\end{theorem}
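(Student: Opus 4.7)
The plan is to reduce from the 3-regular restriction of Exact Cover by 3-Sets (X3C), where the input is a universe $X=\{x_1,\dots,x_{3t}\}$ and a family $\mathcal{S}=\{S_1,\dots,S_m\}$ of $3$-element subsets such that every element appears in exactly three sets, and the question is whether $X$ can be exactly covered by $t$ of the $S_j$. Working with a 3-regular instance keeps ballot sizes uniform, which is essential because SAV scores are rationals whose denominators depend on ballot size and the reduction needs these denominators under control.

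For each variant, the construction shares the same backbone: the candidate set is $\{p\}\cup\{e_1,\dots,e_{3t}\}$, possibly together with some padding candidates whose scores can be fixed by dedicated voters; for each set $S_j$ a ``set-voter'' $v_j$ approves the three $e_i$'s corresponding to the elements of $S_j$, so initially every $e_i$ has SAV score exactly $3\cdot\frac{1}{3}=1$, while $p$ has score $0$ (or some calibrated baseline introduced through additional single-approval voters). The committee size $k$ and budget $b=t$ are set so that $p$ enters a winning committee if and only if the $t$ chosen bribery actions symmetrically dilute every $e_i$ by the same tiny amount and simultaneously lift $p$ past the resulting threshold. This is precisely the interaction property flagged in the surrounding text: adding an approval for $p$ inside $v_j$ changes every share in $v_j$ from $1/3$ to $1/4$, so $p$ gains $1/4$ while each of the three element-candidates in $S_j$ loses $1/12$; an exact cover produces uniform dilution, whereas an unequal choice of sets leaves at least one $e_i$ strictly above the threshold $p$ can achieve with $t$ operations.

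For $\textsc{(\$)DeleteApprovals}$ the roles are reversed — removing an approval from $v_j$ boosts the shares of the remaining approved candidates — so the exact-cover pattern is encoded by deleting approvals that collectively push a chosen set of $e_i$'s below the winning threshold in a coordinated way. For $\textsc{(\$)SwapApprovals}$ (including the swap-to-$p$-only restriction) the same skeleton works because a swap inside $v_j$ is precisely a simultaneous addition-plus-deletion; swapping an element approval to $p$ reproduces the additive step above while conserving ballot size, which actually tightens the budget analysis. The priced versions are obtained from the unpriced ones by assigning unit cost to the operations used in the forward direction of the reduction and prohibitively large cost to all others, so that NP-hardness transfers for free.

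I expect the main obstacle to be the arithmetic calibration: one must ensure that the score threshold $p$ reaches after $t$ operations is met by an $e_i$ if and only if each $e_i$ is hit the right number of times, and that no trick using asymmetric ballot sizes or concentrating multiple operations in the same ballot can beat the exact-cover configuration. Ensuring this forces the introduction of padding voters with carefully chosen ballot sizes and a short convexity/concavity argument on $f(s)=\frac{1}{s+1}$ to rule out non-exact-cover solutions; once that inequality is established, correctness in both directions follows cleanly and the same template handles every rule-operation combination listed in the statement.
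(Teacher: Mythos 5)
Your construction for \textsc{(\$)AddApprovals} is essentially the one the paper uses: sets become voters, elements become candidates, and adding an approval for $p$ to a ``set voter'' dilutes the shares of the three co-approved element-candidates, so that forcing every element-candidate below $p$'s achievable score encodes a cover. That part of the proposal is sound in outline (the calibration you flag is real but doable). The genuine gap is in your treatment of \textsc{SwapApprovals}. You assert that a swap ``reproduces the additive step above while conserving ballot size,'' but under SAV these two facts are in direct tension: because the ballot size is conserved, a swap of $e_{i_1}$ to $p$ in $v_j=\{e_{i_1},e_{i_2},e_{i_3}\}$ leaves the shares of $e_{i_2}$ and $e_{i_3}$ completely unchanged; it only transfers $\nicefrac{1}{3}$ of a point from the swapped-out candidate to $p$. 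The dilution mechanism your whole skeleton relies on---one operation per set simultaneously lowering all three of its elements---simply does not occur for swaps. Worse, with your orientation the swap problem degenerates: since each vote can donate at most one approval to $p$, lowering every element-candidate amounts to choosing a system of distinct representatives (one element per set-voter), which in a $3$-regular instance always exists by Hall's theorem, and a handful of swaps already lift $p$ past the baseline score of $1$. So the reduction as described is not sound for swaps.

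The paper's actual swap construction inverts the roles precisely to avoid this: candidates correspond to \emph{sets} (plus dummies), voters correspond to \emph{elements}, and the only way for $p$ to collect $\nicefrac{1}{3}$ from each of the $3n$ element-voters while driving $n$ rivals below $p$ is to concentrate the removed approvals so that $n$ set-candidates each lose all three of their element-voter approvals---which is exactly an exact cover. You should rework your swap case along these dual lines (or find another mechanism that exploits the transfer, rather than dilution, semantics of a swap). Your \textsc{DeleteApprovals} sketch is also too thin to assess: you note that a deletion \emph{boosts} the surviving co-approved candidates, but you never explain how the construction prevents these boosts from undoing the intended coordinated drop; the paper itself warns that this case is ``far more involved,'' so a one-sentence role reversal does not suffice.
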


Fortunately, not all is lost. Using the general technique of Elkind et
al.~\cite{elk-fal:c:shift-bribery,elk-fal-sli:c:swap-bribery}, we
obtain a $2$-approximation algorithm for the (priced) variant of
adding approvals for $p$ only. To employ the approach of Elkind et
al.~\cite{elk-fal:c:shift-bribery,elk-fal-sli:c:swap-bribery}, it must
be the case that (1) after each bribery action, each non-preferred
candidate $c$ loses at most as many points as the preferred one gains,
(2) there is a pseudo-polynomial time algorithm that computes a
bribery action maximizing the score of $p$ for a given budget, and (c)
if $X$ and $Y$ are two sets of legal bribery actions (i.e., all
bribery actions from $X$ can be executed jointly, and all actions from
$Y$ can be executed jointly,) then $X \cup Y$ also is a legal set of
bribery actions. These conditions hold for
SAV-\textsc{(\$)AddApprovals-Bribery} (for adding approvals to $p$
only) and we get the following result.

\begin{theorem}
  There is a $2$-approximation polynomial-time  algorithm for
  \textsc{SAV-(\$)AddApprovals-Bribery} for the case where we add
  approvals to $p$ only.
\end{theorem}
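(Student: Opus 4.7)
The plan is to invoke the general framework of Elkind, Faliszewski, and Slinko that the authors reference in the paragraph preceding the theorem: provided the three listed conditions hold, a polynomial-time $2$-approximation follows mechanically from a reduction to (pseudo-polynomial) knapsack plus a rounding argument. Thus my task reduces to verifying conditions (1), (2), and (3) for \textsc{SAV-(\$)AddApprovals-Bribery} in the restricted setting where one may only add approvals for the preferred candidate $p$.

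I would handle condition (1) first, since this is the only step where the specifics of SAV enter. Consider any voter $v_i$ that currently approves $\ell \geq 1$ candidates and does not yet approve $p$. Adding $p$ to $v_i$'s approval set raises $p$'s SAV score by exactly $\frac{1}{\ell+1}$. Simultaneously, every other candidate $c$ currently approved by $v_i$ sees its share of $v_i$'s point decrease from $\frac{1}{\ell}$ to $\frac{1}{\ell+1}$, a per-candidate loss of $\frac{1}{\ell(\ell+1)}$; all candidates not approved by $v_i$ are unaffected. Since $\ell \geq 1$, we have $\frac{1}{\ell(\ell+1)} \leq \frac{1}{\ell+1}$, so no non-preferred candidate loses more points than $p$ gains in any single action (the trivial case $\ell=0$ is harmless).

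Conditions (2) and (3) follow from the restricted structure of our action set. For (2), each allowed action simply chooses some voter $v_i$ not currently approving $p$ and adds $p$ to $v_i$'s approvals, contributing value $\frac{1}{|v_i|+1}$ to $p$'s score at a cost equal to the (possibly distinct) price of that specific addition. Maximizing $p$'s score subject to a budget bound is therefore a $0/1$ knapsack instance, solved by the standard dynamic program in time $O(nB)$, which is pseudo-polynomial as required. Condition (3) is immediate: add-$p$ actions on distinct voters cannot interfere, so the union of two feasible action sets is feasible.

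The one real subtlety is condition (1): a single add-$p$ action can decrease the scores of many opponents at once, so it is important to note that the framework only requires the \emph{per-candidate} loss, and not the aggregate loss across all opponents, to be bounded by $p$'s gain. The elementary algebra above confirms this. With (1)--(3) established, plugging into the Elkind--Faliszewski--Slinko framework yields the claimed $2$-approximation.
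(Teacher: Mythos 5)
Your proposal is correct and follows exactly the route the paper takes: it invokes the Elkind--Faliszewski--Slinko $2$-approximation framework and verifies the three stated conditions, with condition (1) reducing to the observation that an add-$p$ action on a voter approving $\ell\geq 1$ candidates gains $p$ exactly $\frac{1}{\ell+1}$ while costing each co-approved opponent only $\frac{1}{\ell(\ell+1)}\leq\frac{1}{\ell+1}$. The paper states this verification only implicitly, so your explicit algebra and the knapsack argument for condition (2) fill in precisely the details the paper omits.
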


The theorem also works for SAV-\textsc{AddApprovals-Bribery} (i.e.,
for the unrestricted, unpriced case) because if there is a solution
that adds approvals to some candidates other than $p$, then there is
also one with the same cost or lower that adds approvals to $p$
only. (If we add an approval for some candidate $c$, $c \neq p$, in a
vote where $p$ is not approved, then it is better to add the approval
to $p$. If we add an approval in a vote where $p$ already is approved,
then it is better to not make this addition.)

On the other hand, the above technique does not apply to
SAV-\textsc{\$AddApprovals-Bribery} (e.g., there are bribery actions
that do not increase the score of the preferred candidate but decrease
the scores of others, which breaks condition (1) above) and, indeed,
we obtain inapproximability.

\begin{theorem}\label{theorem:sav_add_inapprox}
  For each $\alpha > 1$, if $\p \neq \np$ then there is no
  polynomial-time $\alpha$-approximation algorithm for
  \textsc{SAV-$\$$AddApprovals-Bribery}.
\end{theorem}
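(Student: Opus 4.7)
The plan is to turn the $\np$-hardness construction for \textsc{SAV-\$AddApprovals-Bribery} into a gap-producing reduction, by exploiting the fact that in the priced variant we are free to inflate the prices of selected operations arbitrarily. Fix $\alpha > 1$. I would start from an $\np$-hard source problem (for example a suitable version of \textsc{Exact Cover by 3-Sets} or \textsc{Independent Set} on cubic graphs, mirroring the choice used for the swap-variant hardness in the AV case) and build an election in which every possible bribery operation falls into one of two classes: ``cheap'' unit-cost operations that encode the combinatorial choices of the source instance, and ``forbidden'' operations carrying an inflated price $M$, where $M$ is chosen strictly larger than $\alpha$ times the intended YES-case budget $B$.

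The construction should arrange that every competitor of $p$ has SAV score slightly above $p$'s, and that the only way to equalize the scores using the cheap operations is to add approvals in voters whose local structure encodes a feasible solution to the source problem. To control scores precisely, I would exploit that adding an approval in a voter who currently approves $k$ candidates dilutes every already-approved candidate's share by exactly $\tfrac{1}{k(k+1)}$; with appropriate approval-set sizes, small score differences can be engineered so that a YES-instance solution requires exactly $B$ cheap operations. In a NO instance, no combination of cheap operations makes $p$ a winner, so every successful bribery must perform at least one forbidden operation and therefore costs at least $M > \alpha B$. Any polynomial-time $\alpha$-approximation algorithm would then return a bribery of cost at most $\alpha B < M$ on YES instances and would be unable to meet the budget on NO instances, yielding a polynomial-time decision procedure for the source problem and contradicting $\p \neq \np$. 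Since $\alpha$ is a constant and $M$ is a single number polynomially bounded in the source instance size, the overall reduction remains polynomial.

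The main obstacle is to establish the NO-direction: that no clever combination of cheap operations can avoid using at least one forbidden operation. Because SAV scoring is non-linear in the sizes of voters' approval sets, an added approval in a voter affects every other approved candidate in that voter, so a briber has considerable flexibility to act on candidates other than $p$ (this is exactly what makes condition~(1) of the Elkind et al.\ framework fail and blocks the $2$-approximation argument used for the restricted variant). I would therefore include padding voters with carefully calibrated approval sets whose role is to rigidly couple the feasible cheap operations to the combinatorial choices of the source problem, and then verify, by a careful score accounting at $p$ and at each of $p$'s competitors, that every cheap-only bribery bringing $p$ into a winning committee induces a valid solution of the source instance. Once this correspondence is proven tight, the gap argument immediately yields the claimed inapproximability for every constant $\alpha > 1$.
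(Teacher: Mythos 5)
Your template---inflate the prices of all ``non-intended'' operations to some $M > \alpha B$ and argue that in NO-instances every successful bribery must use one of them---is a reasonable instinct for priced bribery, but it does not go through for \textsc{\$AddApprovals} under SAV, and the failure is exactly at the step you yourself flag as the main obstacle. The problem is structural: unlike swaps (where a single voter can donate at most one approval to $p$, so mutually exclusive operations can encode, e.g., independence or exactness constraints), \emph{any} set of add-approval operations can be executed jointly, and adding an approval for $p$, or diluting a competitor in a vote not containing $p$, is monotonically helpful to $p$. Consequently you cannot arrange that in a NO-instance ``no combination of cheap operations makes $p$ a winner'': if performing \emph{all} cheap operations dilutes every competitor enough for $p$ to enter the committee---and in any covering-type encoding it does, since the full set family is always a cover---then the briber also succeeds in NO-instances using only cheap operations, just more of them. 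The best such a construction can extract is that the optimal bribery cost equals the minimum cover size; for a source such as \textsc{X3C} or cubic \textsc{Independent Set}, the NO-case optimum exceeds the YES-case budget $B$ by at most a small constant factor (e.g., at most $3$ when there are $3n$ sets and $B=n$), so the price-gap argument cannot rule out \emph{every} constant $\alpha>1$.

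This is precisely why the paper instead gives a cost-preserving reduction from \textsc{SetCover}: each set becomes a voter approving its element-candidates, cheaply adding an approval to a set-voter dilutes all of that voter's element-candidates (``covers'' them), $p$ joins a winning committee exactly when every element-candidate has been diluted, and the optimal bribery cost equals the optimal cover size. The ``for each $\alpha$'' then comes from the constant-factor (indeed logarithmic) inapproximability of \textsc{SetCover} under $\p\neq\np$, not from a price-induced gap. To repair your argument you would have to either switch to such an approximation-preserving reduction from an already-inapproximable optimization problem, or exhibit a mechanism by which performing extra cheap additions can actively prevent $p$ from winning---and the monotonicity of SAV scores under these operations gives you no such mechanism.
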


The proof follows by noting that the classic \textsc{SetCover} problem
(which is not approximable within any constant factor when $\p \neq
\np$) can be embedded within \textsc{SAV-$\$$AddApprovals-Bribery}.
The key idea is to model each set from a \textsc{SetCover} instance as
a voter.  Due to the nature of SAV, as soon as we add an approval to a
vote, the scores of all the previously approved candidates (who
correspond to elements) decrease.  Our construction guarantees that to
make $p$ winner, one needs to decrease the score of all
element-candidates and, 
thus, adding an approval to a ``set voter'' can be viewed as covering
the elements from the corresponding set.  It is possible to provide
such construction which preserves the inapproximability bound of
\textsc{SetCover}.

The proof for 
\textsc{SAV-(\$)DeleteApprovals-Bribery}
relies on similar tricks, but is far more involved (again, we cannot
use the $2$-approximation technique because deleting an approval for a
candidate decreases his or her score more than it increases the score
of the preferred candidate).

\begin{theorem}
  For each $\alpha > 1$, if $\p \neq \np$ then there is no
  polynomial-time $\alpha$-approximation algorithm for
  \textsc{SAV-$\$$DeleteApprovals-Bribery}.
\end{theorem}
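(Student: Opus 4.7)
The plan is to reduce from \textsc{SetCover} --- given a universe $U = \{e_1, \ldots, e_n\}$, a family $\mathcal{S} = \{S_1, \ldots, S_m\}$, and a cover-size target $k$, decide whether $k$ members of $\mathcal{S}$ cover $U$. Since \textsc{SetCover} has no constant-factor polynomial-time approximation unless $\p = \np$, a cost-preserving reduction to \textsc{SAV-\$DeleteApprovals-Bribery} transports the inapproximability factor.

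The election I would construct contains the preferred candidate $p$, an ``element candidate'' $c_e$ for each $e \in U$, a ``set voter'' $v_S$ for each $S \in \mathcal{S}$ embedded in a small gadget, and padding voters plus dummy candidates used purely to calibrate SAV scores. The deletion prices would be chosen so that the overwhelming majority of deletions are prohibitively expensive, while for every $S$ there is a single cheap ``set-selection'' deletion inside the $v_S$-gadget. Padding would ensure that initially every $c_e$ beats $p$ by a small, fixed SAV margin, and that $p$ sits just outside every $k$-committee. The gadget would be wired so that performing the set-selection deletion for $S$ closes the SAV gap between $p$ and every $c_e$ with $e \in S$ --- either by boosting $p$ or by reducing $c_e$ --- and does nothing for $c_{e'}$ with $e' \notin S$. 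A successful bribery then corresponds exactly to a choice of sets whose union is $U$, and the budget is tuned so that the bribery cost matches the cover size up to a constant factor.

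The main obstacle is the \emph{opposite} sign of side-effects between additions and deletions in SAV. In \textsc{\$AddApprovals}, adding an approval dilutes all existing shares in the vote, so a single addition to a set voter simultaneously depresses the scores of many element candidates --- the ``one action, many effects'' structure that makes the \textsc{SetCover} reduction for \textsc{\$AddApprovals} natural. Deletion behaves in the reverse way: removing an approval of $c$ from $v$ drops only $c$'s score, while \emph{raising} the shares of all other approved candidates in $v$ by $\frac{1}{|v|(|v|-1)}$. A direct mirror of the addition reduction is therefore impossible. To handle this, I would arrange each set-selection deletion to target a dummy that is co-approved only with $p$ (so the harmful share-increase is absorbed by $p$'s score, not by the element candidates), and carry the ``$S$ covers $e$'' relation through separate auxiliary voters that couple the state of $v_S$ to $c_e$ via carefully chosen initial approval counts. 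Getting this coupling right while keeping SAV arithmetic in integers (after appropriate scaling) is the intricate part that makes the proof ``far more involved'' than the addition case.

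The final subtlety is ruling out ``creative'' bribery strategies --- spreading small deletions across several set gadgets, or buying partial progress inside element-side voters --- that could, a priori, make $p$ win more cheaply than any set cover. I would close this gap by amplification: blow up the relevant initial SAV score differences by a polynomial factor in $|U|$ so that any successful deletion pattern can be rounded to a disjoint union of full set-selection actions of no greater cost. This preserves the cost ratio between bribery budget and cover size up to a constant, and therefore transfers the \textsc{SetCover} inapproximability factor to \textsc{SAV-\$DeleteApprovals-Bribery}.
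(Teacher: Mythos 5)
Your starting point matches what the paper says about its own (omitted) proof: the authors likewise state that the argument ``relies on similar tricks'' to the \textsc{SetCover} embedding used for \textsc{SAV-\$AddApprovals-Bribery}, and they too identify the sign reversal of SAV side effects under deletion as what makes it ``far more involved.'' The trouble is that your plan stops exactly where the real work begins. The essence of the \textsc{SetCover} embedding is that \emph{one} affordable action must simultaneously affect the $p$-versus-$c_e$ margin for \emph{every} $e \in S$, and neither mechanism you sketch achieves this. If the cheap ``set-selection'' deletion removes a dummy co-approved only with $p$, its sole effect is a boost to $p$ of a fixed size that does not depend on $S$; performing $t$ such deletions then helps $p$ by an amount proportional to $t$ whether or not the chosen sets cover $U$, so the covering structure evaporates. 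The fallback---``auxiliary voters that couple the state of $v_S$ to $c_e$''---is not realizable in SAV: a candidate's score is a sum over the voters approving him or her, each contribution depending only on that voter's own approval set, so a deletion performed in $v_S$ cannot change what any \emph{other} voter contributes to $c_e$. A single deletion decreases the score of exactly one candidate and raises each co-approved candidate's share uniformly by $\frac{1}{|v|(|v|-1)}$; building a gadget in which one cheap deletion ``covers'' three or more element candidates (and only those) is precisely the missing construction, and the closing amplification/rounding paragraph cannot substitute for it.

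A secondary remark: rather than insisting on a cost-preserving approximation reduction from \textsc{SetCover}, it is cleaner---and is what the paper does in its one fully written inapproximability proof, for \textsc{SAV-(\$)SwapApprovals-Bribery}---to fix $\alpha$ up front and construct a gap instance in which yes-instances of the source problem admit a bribery of cost $b$ while no-instances force cost above $\alpha b$, so that an $\alpha$-approximation algorithm would decide an $\np$-hard problem. That route also sidesteps the additive cost contributed by padding and dummy deletions, which your ``matches the cover size up to a constant factor'' tuning would otherwise need to control explicitly.
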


The case of swapping approvals is more tricky. We cannot use the
$2$-approximation trick, because 
if $X$ and $Y$ are two sets of approval-swaps to perform (each
possible to execute) then $X \cup Y$ may be impossible to perform
(e.g., it may require to move an approval to the preferred candidate
within some vote from two different candidates).  In fact, for the
case where we only move approvals to the preferred candidate, we
obtain outright inapproximability result (which immediately translates
to the unrestricted, priced setting; with high prices we can enforce
approval-swaps to $p$ only). The general result for unit-price swaps
remains elusive.

\begin{theorem}
  For each $\alpha > 1$, if $\p \neq \np$ then there is no
  polynomial-time $\alpha$-approximation algorithm for
  \textsc{SAV-(\$)SwapApprovals-Bribery} for the case where we only
  move approvals to $p$.
\end{theorem}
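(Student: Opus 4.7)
The plan is to reduce from \textsc{SetCover}, which cannot be approximated in polynomial time within any constant factor (in fact, not within $(1-\epsilon)\ln|U|$) unless $\p = \np$. This inapproximability will transfer to our problem via a constant-factor approximation-preserving reduction. The construction adapts the \textsc{SetCover} embedding used for \textsc{SAV-\$AddApprovals-Bribery} (Theorem~\ref{theorem:sav_add_inapprox}), but must compensate for the fact that a single SAV swap only affects two candidates (the attacked one and $p$), whereas an SAV addition dilutes the scores of all candidates currently approved in the voter via normalization.

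Given an instance $(U, \mathcal{S}, K)$, the election contains $p$, one element-candidate $c_u$ per $u \in U$, and $k-1$ shoo-in candidates (heavily approved by dedicated voters) that occupy all but one winning-committee slot, so that $p$ enters a winning committee of size $k$ iff $p$ ties or beats every $c_u$. For each set $S \in \mathcal{S}$, the construction introduces a \emph{set gadget}: $|S|$ identical copies of a voter approving $\{c_u : u \in S\}$, so that ``fully using'' $S$ corresponds to performing one swap in each copy, each attacking a distinct element of $S$. Booster voters calibrate the initial scores of the element-candidates to a uniform value $X$ chosen so that exactly one appropriately placed attack per $c_u$ brings it down to $p$'s final score. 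Any set cover of size $K$ then yields a bribery of cost at most $sK$ (where $s=\max_S|S|$), while any bribery of cost $B$ decodes to a set cover of size at most $B$ (by taking the set-gadgets actually used).

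The main obstacle is ruling out ``shortcut'' bribery strategies that mix booster-voter swaps with partial set-gadget swaps, so as to ensure the correspondence $K^{*} \leq B^{*} \leq s K^{*}$ is tight (with $K^{*}$ the minimum cover and $B^{*}$ the minimum bribery). This is handled by careful calibration of booster-voter sizes so that no alternative swap type can produce a strictly more favorable score transfer than the set-gadget swaps. For the inapproximability claim: given any $\alpha > 1$, apply the reduction to a gap-\textsc{SetCover} instance in which YES instances admit covers of size $\leq K$ while NO instances admit only covers of size $> \alpha s K$ (possible since \textsc{SetCover}'s inapproximability ratio is unbounded in $|U|$); the resulting bribery instances then exhibit a gap of $\alpha$, ruling out any polynomial-time $\alpha$-approximation. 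The \$-variant follows from the unit-price case by the remark in the excerpt: setting prohibitively high prices on all swaps other than those emulating the restricted unit-price construction reduces the priced unrestricted problem to the unit-price ``only to $p$'' one.
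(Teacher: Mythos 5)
There is a genuine gap: the cost accounting of your \textsc{SetCover} reduction does not work. In your construction every element-candidate $c_u$ must be attacked exactly once, and each swap attacks exactly one element-candidate; since you provide $|S|$ copies of each set-gadget voter, there is no scarcity that ever forces a solution to ``concentrate'' its swaps inside few gadgets. Consequently the optimal bribery cost is always exactly $|U|$ (pick, for each element, any one gadget voter of any set containing it and swap there), completely independently of the minimum cover size $K^{*}$. Your chain $K^{*}\leq B^{*}\leq sK^{*}$ is then just the vacuous $K^{*}\leq |U|\leq sK^{*}$, and the decoding ``take the set-gadgets actually used'' only ever produces a cover of size $|U|$, which exists trivially. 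So neither the decision version nor any approximation gap of \textsc{SetCover} is transferred. A secondary problem is the final gap amplification: you need NO-instances whose covers all exceed $\alpha s K$, but for bounded set size $s$ \textsc{SetCover} is constant-factor approximable, and for unbounded $s$ the required gap $\alpha s$ exceeds the known $(1-\epsilon)\ln|U|$ hardness; you never reconcile these.

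The paper's proof avoids exactly this trap by reducing from \textsc{Restricted X3C} and exploiting the constraint your gadget deliberately removes: each element voter approves the three sets containing its element and can host \emph{at most one} swap to $p$ (once $p$ is approved there, no further approval can be moved to $p$ in that vote). Scores are calibrated so that a set-candidate falls below $p$ only if it loses \emph{all three} of its element-voter approvals, which forces the ``fully deprived'' sets to be pairwise disjoint; hence $n$ of them can be beaten iff an exact cover exists. The gap is then a feasibility gap (budget $3n$ suffices in YES-instances, while even $3\alpha n$ swaps cannot make $p$ win in NO-instances, thanks to the padding voters sized in terms of $\alpha$), rather than a cost gap inherited from a covering problem. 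If you want to salvage your approach, you would need a mechanism that charges per \emph{set used} rather than per \emph{element attacked} --- which is precisely what single-swap-per-voter plus disjointness provides and what your $|S|$-copy gadget destroys.
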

\begin{proof}
  Let us fix $\alpha$ to be a positive integer, $\alpha \geq 1$.  We
  will give a reduction $f$ from a restricted variant of the
  \textsc{X3C} problem to \textsc{SAV-SwapApprovals-Bribery} (for the
  case where we can move approvals to $p$ only) and argue that an
  $\alpha$-approximation algorithm for the latter would have to decide
  the former.
  In our \textsc{Restricted X3C} we are given a set $X = \{x_1,
  \ldots, x_{3n}\}$ of elements and a family $\calS = \{S_1, \ldots,
  S_{3n}\}$ of sets, such that (a) each set contains exactly three
  elements, and (b) each element belongs to exactly three sets. 
  We ask if there is a family of $n$ sets from $\calS$ whose union is
  exactly $X$. This variant 
  remains
  $\np$-hard~\cite{gonzalez1985clustering}.

  Let $I$ be an instance of \textsc{Restricted X3C} (with input as
  described above).  We set $N = 27(\alpha n + 1)$ (intuitively, $N$
  is simply a value much larger than $n$) and we form an instance of
  our problem as follows. We let the candidate set be $C = \calS \cup
  D \cup \{p\}$, where $D = \{d_1, \ldots, d_N\}$ is a set of dummy
  candidates needed for our construction, and we introduce the
  following voters:
  \begin{enumerate}
  \item For each element $x_i \in X$, we introduce one voter $v_i$
    that approves the three set-candidates $S_{j'}$, $S_{j''}$,
    $S_{j'''}$ that correspond to the sets that contain $x_i$. We
    refer to these voters as element voters.

  \item We introduce $n \cdot (N+3n) - 1$ voters, each approving all
    the candidates from $\calS$ and $D$. We write $V'$ to denote the
    set of these voters.

  \item We introduce $10nN$ voters, each approving all the candidates
    in $D$. We denote the set of these voters by $V''$.

  \end{enumerate}
  Prior to bribery, $p$ has score $0$, each set candidate has score
  $1+n - \frac{1}{N+3n}$, and each dummy candidate has score at least
  $10n$.  We set the committee size to $k = N + 2n + 1$, and the
  budget to $b = 3n$.

  If $I$ is a ``yes''-instance, then it is possible to ensure that $p$
  belongs to some winning committee using at most $3n$ approval swaps:
  For each set $S_j$ from the exact cover we take all voters
  corresponding to elements covered by $S_j$ and for these voters we
  move approvals from $S_j$ to $p$. Consequently, $p$ is approved by
  all the voters corresponding to elements of $X$ and obtains
  $\nicefrac{1}{3} \cdot 3n = n$ points. Since the score of each of
  the sets from the exact cover drops to $n - \nicefrac{1}{N+3n} < n$,
  there are $n$ candidates with score lower than $p$.  In effect, $p$
  belongs to a winning committee.

  Now, consider what happens if $I$ is a ``no''-instance. After
  $3\alpha n$ swaps, the score of $p$ can be at most
  $n + \nicefrac{(3\alpha n-3n)}{N} 
  < n + \nicefrac{1}{9}$ (at best, we can get $n$ points from the
  element voters using $3n$ swaps, and use the remaining $3\alpha
  n-3n$ swaps for voters in $V''$).
  Since there is no exact cover, after executing all the swaps there
  are at most $n-1$ set candidates such that no element voter
  approves them.  Every other set candidate is approved by at least
  one element voter and at least $n \cdot (N+3n) -1 - 3\alpha n$
  voters from $V'$. The score of such candidate is, thus, at least:
  \begin{align*}
    & \nicefrac{1}{3} + \big(n \cdot (N+3n) -1 - 3\alpha n \big) \cdot \textstyle\frac{1}{N + 3n} \geq \\
    & \nicefrac{1}{3} + n - \textstyle\frac{3\alpha n + 1}{N} \geq n +
    \nicefrac{1}{3} - \nicefrac{1}{9} > n + \nicefrac{1}{9} \text{.}
  \end{align*}
  The candidates from $D$ have even higher scores. Consequently, at
  most $n-1$ candidates have scores lower than $p$ and so $p$ cannot
  be a member of a winning committee.

  Thus, if there were a polynomial-time
  $\alpha$-approximation algorithm for our problem, then we could use
  it to decide the $\np$-hard \textsc{Restricted X3C} problem.
\end{proof}

\subsection{Mostly Hard Cases: GAV and RAV}

Unfortunately, for GAV and RAV we obtain an almost uniform set of
$\np$-hardness and inapproximability results. The only exception
regards (priced) adding approvals for the preferred candidate. 

\begin{theorem}
  Let $\calR$ be one of \textsc{GAV} and \textsc{RAV}, and let
  \textsc{Op} be one of \textsc{(\$)AddApprovals},
  \textsc{(\$)DeleteApprovals}, and \textsc{(\$)SwapApprovals}.  For
  each $\alpha > 1$, if $\p \neq \np$ then there is no polynomial-time
  $\alpha$-approximation algorithm for \textsc{$\calR$-Op-Bribery}.
  This also holds for \textsc{(\$)SwapApprovals} when we can move
  approvals only to the preferred candidate.
\end{theorem}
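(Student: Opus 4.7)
The plan is to give three gap-amplifying reductions from \textsc{Restricted X3C} (as introduced in the proof of the SAV-SwapApprovals-Bribery theorem above), one per operation type, with the priced variants following from the unpriced ones by setting all prices to unity and blocking forbidden moves with prohibitive costs. In each reduction I would amplify the gap by choosing an auxiliary parameter $N = \Theta(\alpha n)$ so that any polynomial-time $\alpha$-approximation would be forced to distinguish yes- from no-instances of \textsc{Restricted X3C}, which is $\np$-hard.

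For \textsc{GAV/RAV-(\$)AddApprovals-Bribery}, I would build an election where a large ``anchor'' block of dummy candidates, supported by many dummy voters, is guaranteed to fill the first $k-1$ greedy rounds regardless of any bribery within budget; only the last slot is in play. As in the prior proofs, $3n$ set candidates together with $3n$ element voters encode the X3C instance. In a yes-instance, $3n$ additions (to $p$ within the element voters corresponding to an exact cover) make $p$'s marginal coverage in the final round strictly dominate every set candidate. In a no-instance, every strategy using fewer than $3 \alpha n$ additions leaves some set candidate $c_j$ with marginal coverage at least matching $p$'s, because the uncovered element voters still suffice to defeat $p$. The \textsc{DeleteApprovals} reductions are the mirror image: we delete approvals from set candidates within element voters to lower their greedy contributions below $p$'s, with $3n$ deletions needed in the yes-case and at least $3 \alpha n$ in the no-case.

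For \textsc{GAV/RAV-(\$)SwapApprovals-Bribery}, and for the variant restricted to moving approvals only to $p$, I would closely adapt the SAV-SwapApprovals reduction given above. The element voters are the only vehicles for useful swaps to $p$, and the $3$-regularity of \textsc{Restricted X3C} ensures that $3n$ swaps suffice iff an exact cover exists, whereas $3 \alpha n$ swaps leave $p$'s marginal coverage behind that of at least $n$ set candidates in the critical round when no cover exists. Since the restricted-to-$p$ variant is the tightest, its hardness implies hardness of the unrestricted and priced versions (the latter by setting all non-$p$-swap prices prohibitively high).

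The main obstacle I expect is the greedy round-by-round analysis: unlike AV or SAV, GAV and RAV may in principle be manipulated by redirecting early greedy picks to candidates other than the ``obvious'' winners, so I must verify that no such alternative strategy bypasses the covering requirement. I would handle this by sizing the anchor block so that its selection in the first $k-1$ rounds is invariant to any bribery of total cost below $\alpha \cdot 3n$, and then by a careful round-$k$ analysis showing that $p$ enters the committee only when the coverage structure essentially encodes an X3C cover. For RAV, the $1/t$ weighting modifies the marginal contributions but preserves the anchor-block dominance and the final-round comparison, so the same reductions carry over with only minor numerical adjustments to the voter multiplicities.
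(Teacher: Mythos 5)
Your \textsc{AddApprovals} reduction is structurally impossible as described, and this is the central gap. You claim that in the yes-case the cheap witness consists of $3n$ additions of approvals \emph{for $p$} in the element voters, while in the no-case no strategy with fewer than $3\alpha n$ additions (of any kind) succeeds. But the paper also proves that \textsc{GAV-AddApprovals-Bribery} and \textsc{RAV-AddApprovals-Bribery} restricted to adding approvals for $p$ only are in $\p$ (with a PTAS for \textsc{RAV}'s priced restricted variant). Running that exact polynomial-time algorithm on your constructed instance with budget $3n$ would accept precisely the yes-instances of \textsc{Restricted X3C}: it finds your add-to-$p$ witness whenever one exists, and in the no-case there is no solution within budget at all, restricted or not. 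That would place an $\np$-hard problem in $\p$, so no such reduction can exist. The correct construction must make the yes-case witness use additions for candidates \emph{other} than $p$ in an essential way---e.g., boosting chosen set-candidates so that they are selected in early greedy rounds and absorb the element voters, thereby clearing the final round for $p$; the covering structure then lives in \emph{which} candidates you boost, not in approvals handed to $p$. Your sketch, including the ``mirror image'' for deletions, never engages with this, and your own closing worry about ``redirecting early greedy picks'' is in fact the entire content of the proof rather than a technicality to be ruled out.

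A second, related gap is the claim that hardness of the swap-to-$p$-only variant implies hardness of the unrestricted versions because it is ``the tightest.'' That implication holds for the \emph{priced} unrestricted variant (price all other swaps prohibitively), but it fails for the \emph{unpriced} unrestricted variant: there the briber has strictly more unit-cost actions available, so a no-instance of your restricted reduction could become cheaply solvable via swaps among non-$p$ candidates that reroute early greedy picks. The theorem claims inapproximability for unpriced unrestricted \textsc{AddApprovals} and \textsc{SwapApprovals} as well, so the no-case analysis must quantify over the full action space, whereas yours only considers actions that touch $p$. Finally, ``closely adapting'' the SAV swap construction will not go through numerically: the SAV argument hinges on the fractional $\nicefrac{1}{|v_i|}$ scores (so that $p$ gains $n$ points from $3n$ swaps while each covering set drops just below $n$), whereas under GAV/RAV coverage each element-voter swap gives $p$ a full point and the $\nicefrac{1}{t}$ weights of \textsc{RAV} interact with the anchor block; the thresholds and multiplicities must be redesigned from scratch, in the spirit of the ordinal multiwinner shift-bribery constructions the paper cites as its actual source.
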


The somewhat involved proof of this theorem is inspired by a related
result of Bredereck et
al~\cite{bre-fal-nie-tal:c:multiwinner-shift-bribery}, for the case of
ordinal elections.

Nonetheless, the case of adding approvals for the preferred candidate
only is easy for both GAV and PAV (although for PAV in the priced
variant we only obtain a PTAS, i.e., a polynomial-time approximation
scheme).

\begin{theorem}
  When restricted to adding approvals to the preferred candidate
  only, \textsc{\{GAV,RAV\}-AddApprovals-Bribery} is in $\p$. For
  \textsc{GAV}, the priced variant of this problem is also in $\p$,
  whereas for \textsc{RAV} there is a PTAS for it.
\end{theorem}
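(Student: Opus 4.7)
The plan is to exploit a uniformity property shared by GAV and RAV: if we add approvals only for $p$, the marginal score (GAV) or marginal PAV gain (RAV) of every other candidate in every round is completely unaffected by our operations. Hence, simulating the rule on the election with $p$ deleted yields a fixed sequence of ``default'' winners $c_1^*, \ldots, c_k^*$ together with round-$r$ thresholds $\alpha_r^*$---namely, the marginal score of $c_r^*$ in round $r$ against $W_{r-1}^* := \{c_1^*, \ldots, c_{r-1}^*\}$. In the bribed execution, $p$ enters at the first round $r$ at which its own marginal (computed against $W_{r-1}^*$) reaches $\alpha_r^*$. The problem therefore decomposes into $k$ independent subproblems---one per candidate target round $r \in \{1, \ldots, k\}$---and the answer is the minimum over $r$ of the cheapest way to make $p$'s round-$r$ marginal reach $\alpha_r^*$.

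For GAV, $p$'s round-$r$ marginal is the number of voters in the uncovered set $U_r^* := V \setminus \bigcup_{t<r}\{v : c_t^* \in v\}$ who (originally or after bribery) approve $p$. An addition raises the marginal by one iff it is placed on a vote in $U_r^*$ that does not yet approve $p$, and contributes zero otherwise. The target-$r$ subproblem thus reduces to covering a fixed deficit by selecting votes from a specific subset: in the unpriced case this is just a count, while in the priced case it is a ``pick the cheapest available votes'' selection. Both are polynomial-time, which (together with the minimization over $r$) places GAV in $\p$ for both the priced and unpriced settings.

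For RAV, an addition of $p$ to a vote $v$ (with $p \notin v^{\mathrm{orig}}$) contributes $\beta_v^{(r)} := 1 / (|v \cap W_{r-1}^*| + 1)$ to $p$'s round-$r$ marginal. The target-$r$ subproblem becomes: select a subset $S$ of available votes minimizing $|S|$ (unpriced) or $\sum_{v \in S} \pi_v$ (priced), subject to $\sum_{v \in S} \beta_v^{(r)} \geq T_r$, where $T_r$ equals $\alpha_r^*$ minus the original contribution from voters already approving $p$. The unpriced case is solved optimally (by a standard exchange argument) by the greedy that picks votes in decreasing order of $\beta_v^{(r)}$ until the threshold is met, placing unpriced RAV in $\p$. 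The priced case is a min-cost covering-knapsack problem with rewards in $\{1, 1/2, \ldots, 1/k\}$; this problem is (weakly) $\np$-hard but admits a standard FPTAS via rounding of the $\beta$-values and dynamic programming on the rounded reward sums. Invoking this FPTAS on each of the $k$ target rounds and returning the cheapest output gives the desired PTAS for priced RAV.

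The main obstacle is careful bookkeeping rather than any deep new idea. In particular, tie-breaking must be encoded into the threshold $\alpha_r^*$ (using $\alpha_r^*$ or $\alpha_r^* + 1$ depending on whether ties favor $p$), and one must verify that additions placed outside the round-$r$-relevant structures are never strictly helpful for target $r$---they can only help by causing $p$ to enter at some earlier round $r' < r$, a case already included in the minimization over $r$.
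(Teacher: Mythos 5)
Your proposal is correct and takes essentially the same route as the paper's proof: guess the round $\ell$ in which $p$ is to be selected, use the fact that adding approvals only for $p$ leaves every other candidate's marginal (and hence the prefix of the greedy execution) unchanged, and then solve a cheapest-selection subproblem for that round---greedily for GAV and unpriced RAV, and via a classic Knapsack (F)PTAS for priced RAV, where each voter contributes a reward of the form $\nicefrac{1}{t}$. Your write-up is merely more explicit about tie-breaking and the covering-knapsack formulation; the underlying argument coincides with the paper's.
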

\begin{proof}[sketch]
  Let $(C,V)$ be an election, let $p$ be the approved candidate, let
  $k$ be the committee size, and let $b$ be the budget.  We consider
  GAV first.  Since it proceeds in $k$ iterations, to ensure that $p$
  is selected, we first guess the iteration $\ell$ in which we plan
  for $p$ to be added to the committee. We execute GAV until the
  $\ell$'th round. Then we execute the following operation until
  either $p$ is to be selected in the $\ell$'th
  round\footnote{Technically, it is possible that by our actions $p$
    would be selected in an earlier round, but it does not affect the
    correctness of the algorithm.} or we exceed the budget (in which
  case, we try a different guess for $\ell$, or reject, if we ran out
  of possible guesses): We find a voter who does not approve any
  candidate in the so-far-selected committee and for whom the price
  for adding approval for $p$ is lowest; we add approval for $p$ for
  this voter.  Simple analysis confirms the running time and
  correctness of the algorithm.

  The algorithm for RAV is very similar: We also guess a round number
  where we plan for $p$ to be selected, and after simulating the
  algorithm until this round, we add the cheapest set of approvals
  guaranteeing that $p$ would be selected in this (or earlier)
  round. The only difference is that for the priced variant, this
  involves solving an instance of the \textsc{Knapsack} problem (each
  voter has a price for adding approval for $p$ and the number of
  points that we obtain by this approval, which is of the form
  $\nicefrac{1}{t}$, for some $t \in \{1,\ldots, \ell\}$). We can use
  a classic \textsc{Knapsack} PTAS for this task.
\end{proof}

Whether \textsc{RAV-\$AddApprovals-Bribery} is $\np$-hard when we can
add approvals for the preferred candidate only remains open (however,
we suspect that it does).


\subsection{FPT Algorithms}

While for several important special cases we either obtained direct
polynomial-time algorithms or polynomial-time approximation
algorithms, most of our problems are $\np$-hard and hard to
approximate within any constant factor. Fortunately, if either the
number of candidates or the number of voters is considered as the
parameter (i.e., can be assumed to be small), we have many $\fpt$
algorithms.


Indeed, for the unpriced setting and the parameterization by the number
of candidates all our problems are in $\fpt$. This follows by the
classic approach of formulating problems as integer linear
programs (ILPs) and applying Lenstra's
algorithm~\cite{len:j:integer-fixed}. Using the approach of Bredereck
et al.~\cite{BFNST15}
that combines Lenstra's algorithm with mixed
integer linear programming,
we also obtain $\fpt$ algorithms for the
priced cases of adding and deleting approvals.

\begin{theorem}
  For each $\calR$ in $\{$\textsc{AV}, \textsc{SAV}, \textsc{GAV},
  \textsc{RAV}, \textsc{CCAV}, \textsc{PAV}$\}$,
  \textsc{$\calR$-(\$)AddApprovals-Bribery} (also when we only add
  approvals for the preferred candidate),
  \textsc{$\calR$-(\$)DeleteApprovals-Bribery}, and
  \textsc{$\calR$-SwapApprovals-Bribery}
  are in $\fpt$ when parameterized by the number of candidates.
\end{theorem}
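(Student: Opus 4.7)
The plan is to reduce each problem to an integer linear program (ILP) whose dimension depends only on $m$ and to invoke Lenstra's algorithm~\cite{len:j:integer-fixed}; for the priced variants I will use the mixed integer linear programming extension of Bredereck et al.~\cite{BFNST15}. The starting observation is that when $m$ is bounded, there are at most $2^m$ possible approval sets, so voters partition into at most $2^m$ \emph{types}. Every legal bribery action transforms a voter of type $A$ into a voter of type $A'$, and the set of admissible pairs $(A,A')$ has size at most $4^m$: for \textsc{AddApprovals}, $A \subseteq A'$; for \textsc{DeleteApprovals}, $A' \subseteq A$; for \textsc{SwapApprovals}, $|A| = |A'|$, with the number of operations being $|A \setminus A'|$ in each case. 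The ``add/swap for $p$ only'' restriction merely prunes this list further.

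\textbf{Unpriced case.} First I would introduce, for every admissible pair $(A,A')$, an integer variable $x_{A,A'}$ counting the voters of type $A$ transformed into type $A'$. The equalities $\sum_{A'} x_{A,A'} = n_A$ preserve voter counts, and $\sum_{A,A'} d(A,A') \, x_{A,A'} \leq b$ encodes the unit-cost budget. To state that $p$ wins, I would guess a committee $W \ni p$ (at most $\binom{m}{k}$ guesses) and, for the sequential rules GAV and RAV, also the order in which its members are selected (at most $k!$ further guesses). For each rule the post-bribery score of a candidate (AV, SAV) or of a committee (CCAV, PAV) is a linear functional of the $x_{A,A'}$ whose coefficients depend only on the target type $A'$---for instance, under PAV a voter of type $A'$ contributes $\sum_{i=1}^{|W \cap A'|} 1/i$, and under CCAV it contributes $1$ iff $W \cap A' \neq \emptyset$. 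The winning condition therefore becomes a bounded collection of linear inequalities in the $x_{A,A'}$, yielding an ILP with $f(m)$ variables and constraints that Lenstra's algorithm solves in $g(m) \cdot \mathrm{poly}(|I|)$ time.

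\textbf{Priced case.} For \textsc{\$AddApprovals-Bribery} and \textsc{\$DeleteApprovals-Bribery}, prices are voter-dependent. Following Bredereck et al.~\cite{BFNST15}, I would keep the $x_{A,A'}$ as integer variables but introduce continuous variables $y_{v,A'} \in [0,1]$, one per voter $v$ and admissible target $A'$, linked to the counts by $\sum_{v : \mathrm{type}(v) = A} y_{v,A'} = x_{A,A'}$ and $\sum_{A'} y_{v,A'} = 1$. The total cost $\sum_{v,A'} \mathrm{price}(v,A') \cdot y_{v,A'}$ is linear in the $y$'s, and once the $x$'s are fixed to integers the resulting subproblem on the $y$'s decomposes, type by type, into a transportation LP whose constraint matrix is totally unimodular; its vertex optima are therefore already integral. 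The MILP thus has only $f(m)$ integer variables and can be solved by the Lenstra-type machinery of~\cite{BFNST15} in FPT time.

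\textbf{Main obstacle.} I expect the most delicate step to be the winning condition for the sequential rules GAV and RAV, whose behaviour depends on the entire history of selection rounds. Once $W \ni p$ and its selection order are fixed, the requirement that in round $\ell$ the chosen member maximises the marginal approval count (for GAV) or the marginal PAV contribution (for RAV) is again a linear inequality in the $x_{A,A'}$ per competing candidate, keeping us inside the ILP framework; but this is where the $k! \leq m!$ guessing blow-up and the per-round bookkeeping of coefficients concentrate, so one must verify carefully that the overall running time remains $g(m) \cdot \mathrm{poly}(|I|)$ and that the argument survives the overlay of the priced MILP.
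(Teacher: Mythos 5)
Your proposal is correct and follows essentially the same route as the paper: voter types plus an ILP solved by Lenstra's algorithm for the unpriced variants, and the mixed-integer extension of Bredereck et al.~\cite{BFNST15} (few integer variables, an integral transportation-type continuous part) for the priced adding/deleting cases, with priced swaps correctly left out because the per-voter swap costs cannot be aggregated by type. The guessing of the committee and, for GAV/RAV, of the selection order with per-round linear optimality constraints is exactly the intended encoding.
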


The reason why we do not obtain the result for
\textsc{\$Swap\-Approvals} is that the technique of Bredereck et
al.~\cite{BFNST15} requires that for each set of candidates $A$, and
for each possible set of bribery actions that can be applied to votes
approving exactly $A$---denote such votes as $V_A$---we have to be
able to precompute the cheapest cost of applying these actions to
exactly one vote from $V_A$, to exactly two votes from $V_A$,
etc. This is easy to do for (priced) adding and deleting approvals
because bribery actions are independent from each other.  Yet, this is
impossible for priced approval swaps as the lowest cost of moving an
approval from some candidate $c$ to some candidate $d$, within a vote
from $V_A$ may depend on what other swaps were performed before on
votes from $V_A$.
Nonetheless, we can handle \textsc{AV-\$SwapApprovals-Bribery}: In
this case it suffices to guess the winning committee and score $T$ of
its lowest-scoring member; then computing a bribery that ensure that
each member of the committee has score at least $T$ and each
non-member has score at most $T$ is easy though a min-cost/max-flow
argument.

\begin{proposition}
  \textsc{AV-\$SwapApprovals-Bribery} is in $\fpt$ when parameterized
  by the number of candidates.
\end{proposition}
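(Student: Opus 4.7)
The plan is to enumerate all committees $W$ containing $p$ together with a target threshold $T$ for the final score of the lowest-scoring member of $W$, and for each such pair $(W, T)$ compute, via a polynomial-size min-cost flow instance, the cheapest swap bribery that leaves every $d \in W$ with final score at least $T$ and every $d \notin W$ with final score at most $T$. There are $\binom{m-1}{k-1} \leq 2^{m-1}$ committees $W \ni p$ and at most $n+1$ values of $T \in \{0, \ldots, n\}$; combined with the polynomial complexity of min-cost flow, the total running time is $2^{O(m)} \cdot \mathrm{poly}(n,m)$, which is $\fpt$ when parameterized by $m$. We accept iff the minimum cost over all $(W, T)$ is at most $b$.

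For a fixed $(W, T)$, I would construct the flow network as follows. Introduce a source $s$ and sink $t$. For each vote $v$ with original approval set $A_v$ and each $c \in A_v$, add an ``out-slot'' node $o_{v,c}$; for each vote $v$ and each $d \in C$, add an ``in-slot'' node $i_{v,d}$; for each candidate $d \in C$, add a node $\gamma_d$. Add arcs as follows: $s \to o_{v,c}$ of capacity $1$ and cost $0$; for each $c \in A_v$, $o_{v,c} \to i_{v,c}$ of capacity $1$ and cost $0$ (keeping the approval); for each $c \in A_v$ and $d \notin A_v$, $o_{v,c} \to i_{v,d}$ of capacity $1$ and cost equal to the price of swapping $c$ to $d$ in vote $v$; $i_{v,d} \to \gamma_d$ of capacity $1$ and cost $0$; and finally $\gamma_d \to t$ of cost $0$ with lower bound $T$ if $d \in W$ and upper bound $T$ if $d \notin W$ (capacity $n$ otherwise).

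Any integer $s$--$t$ flow of value $\sum_v |A_v|$ in this network corresponds to a valid swap bribery of the same total cost: each out-slot $o_{v,c}$ carries exactly one unit recording whether the corresponding approval is kept or swapped, and to which target; the unit capacity on each $i_{v,d} \to \gamma_d$ arc prevents the same vote from placing two approvals on $d$; and the lower/upper bounds on the $\gamma_d \to t$ arcs enforce the score constraints defining $(W, T)$. Conversely, every valid bribery achieving those thresholds induces such a flow. Since min-cost flow with lower bounds is solvable in polynomial time via the standard reduction to a min-cost flow instance without lower bounds, and flow integrality ensures an integer optimum, taking the minimum cost over all $(W, T)$ gives the optimal swap-bribery cost.

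The main subtlety is making sure the per-vote interactions are correctly captured: a single vote cannot contribute two approvals to the same candidate, and a swap target must be a candidate that the vote did not originally approve. The first invariant is guaranteed by the unit capacity on the $i_{v,d} \to \gamma_d$ arcs, which blocks two units within a single vote from reaching the same $\gamma_d$; the second is enforced structurally, since arcs of the form $o_{v,c} \to i_{v,d}$ with $c \neq d$ are inserted only for $d \notin A_v$, so each $i_{v,d}$ with $d \in A_v$ receives flow only through the keep-arc $o_{v,d} \to i_{v,d}$. With these invariants in place, the flow model is in bijection with legal swap-bribery plans meeting the $(W,T)$ constraints, and the enumeration over $(W, T)$ yields the desired $\fpt$ algorithm.
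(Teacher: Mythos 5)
Your proposal is correct and takes essentially the same route as the paper: the paper's proof likewise guesses the winning committee and the score $T$ of its lowest-scoring member, and then computes the cheapest bribery enforcing that every member has score at least $T$ and every non-member at most $T$ via a min-cost/max-flow argument. Your write-up merely supplies the explicit network construction that the paper leaves as a one-line sketch.
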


For the parameterization by the number of voters, we use a more varied
set of approaches.  For the case where we add approvals for the
preferred candidate only, a simple exhaustive search algorithm is
sufficient, even for arbitrary prices.  Specifically, it suffices to
guess for which voters we add an approval for $p$, check that it is
within the budget, and that $p$ is then selected for some winning
committee. Recall that for the parameterization by the number of voters, winner
determination is in $\fpt$ for all our rules; for PAV and CCAV this
follows from the proof of Theorem~15 of Faliszewski et
al.~\cite{fal-sko-sli-tal:c:top-k-counting}. To simplify notation, we
will say that a rule has $\fpt(n)$ winner determination if there is an
$\fpt$ algorithm (parameterized by the number of voters) that checks
if a given candidate belongs to some winning committee.

\begin{theorem}
  For each rule $\calR$ with $\fpt(n)$ winner determination,
  \textsc{$\calR$-(\$)AddApprovals-Bribery} for the case where we add
  approvals for the preferred candidate only is in $\fpt$ when
  parameterized by the number of voters.
\end{theorem}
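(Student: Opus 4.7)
The plan is a direct brute-force enumeration over the relevant bribery actions, using the assumed $\fpt(n)$ winner-determination algorithm as a black box. The first observation is that, in the restricted setting where we may only add approvals for the preferred candidate $p$, every bribery is completely described by the set $S \subseteq V$ of voters who receive a new approval for $p$; adding $p$ to a voter who already approves $p$ is either not a legal operation or a no-op, so without loss of generality $S \subseteq \{v \in V : p \notin v\}$. Hence the entire space of bribery actions worth considering has size at most $2^n$.

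For each candidate subset $S$, I would first compute its total cost (equal to $|S|$ in the unit-price variant, and $\sum_{v \in S} \pi(v,p)$ in the priced variant, where $\pi(v,p)$ denotes the price of adding an approval for $p$ at voter $v$) and discard $S$ if this cost exceeds the budget $b$. Next, I would form the modified election $(C, V_S)$ obtained from $(C,V)$ by inserting $p$ into every $v \in S$, and invoke the assumed $\fpt(n)$ winner-determination procedure to check whether $p$ belongs to some $\calR$-winning committee of size $k$ in $(C, V_S)$. We accept if any $S$ passes both tests and reject otherwise. Correctness is immediate: every successful bribery picks out some subset $S$ that our enumeration considers, and each $S$ we accept corresponds to a valid bribery achieving the goal.

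The total running time is $2^n \cdot T_{\mathrm{WD}}(n) \cdot \mathrm{poly}(|C|+|V|)$, where $T_{\mathrm{WD}}$ is the running time of the $\fpt(n)$ winner-determination routine; since $T_{\mathrm{WD}}$ is $f(n)$ times a polynomial for some computable $f$, the overall bound remains $\fpt$ in $n$. There is no real combinatorial obstacle here: the only non-obvious point is recognizing that a bribery restricted to ``add approvals for $p$ only'' reduces to a single binary choice per voter, which shrinks the search space to $2^n$ and lets us simply offload the rest of the work to the winner-determination oracle guaranteed by the hypothesis. Consequently the same argument works verbatim for both the unit-price and priced variants.
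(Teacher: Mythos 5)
Your proposal is correct and matches the paper's argument exactly: the paper also observes that a bribery restricted to adding approvals for $p$ is determined by the subset of voters receiving the new approval, enumerates all (at most $2^n$) such subsets, checks the budget, and invokes the assumed $\fpt(n)$ winner-determination routine on each resulting election. Nothing to add.
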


There are also general algorithms for the case of unpriced adding or
swapping approvals (not necessarily for $p$).  A \emph{unanimous}
voting rule is a voting rule for which if there is a candidate which
is approved by all the voters, then this candidate is in some winning
committee. (Note that all our rules are unanimous.)  A rule is
\emph{symmetric} if it treats all candidates and voters in a uniform
way (i.e., the results do not change if we permute the collection of
voters, and if we permute the set of candidates, then this analogously
permutes the candidates in the winning committees).  We say that two
candidates are of the same \emph{type} if they are approved by the
same voters; there are at most $2^n$ candidate types in an election
with $n$ voters (this idea of candidate types was previously used by
Chen et al.~\cite{che-fal-nie-tal:c:few-voters}).

\begin{theorem}
  For each symmetric, unanimous rule $\calR$ with $\fpt(n)$ winner
  determination, \textsc{$\calR$-AddApprovals-Bribery} and
  \textsc{$\calR$-SwapApprovals-Bribery} are in $\fpt$ when
  parameterized by the number of voters.
\end{theorem}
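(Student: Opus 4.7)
The plan is to combine (i) a unanimity-based observation that caps the effective budget at $n$, with (ii) an enumeration of bribery plans modulo the candidate-permutation symmetry of $\calR$, exploiting the fact that with $n$ voters there are at most $2^n$ candidate types.

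\textbf{Step 1 (Budget reduction via unanimity).} First, I would argue that, without loss of generality, the budget satisfies $b < n$. For AddApprovals this is immediate: if $b \geq n$, add an approval for $p$ from every voter who does not already approve $p$ (using at most $n$ additions), after which $p$ is approved by all voters and, by unanimity, belongs to some winning committee. For SwapApprovals the analogous argument works whenever each voter approves at least one candidate: for each voter $v$ not approving $p$, swap one of $v$'s approvals to $p$.

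\textbf{Step 2 (Enumeration modulo symmetry).} Because $\calR$ is symmetric, two bribery plans that induce the same type profile of non-$p$ candidates (with $p$'s final type tracked separately) yield the same set of winning committees. A candidate type is a subset of $[n]$, so there are at most $2^n$ types. Under the cap $b < n$, a plan affects at most $2(n-1)$ non-$p$ candidates (each operation touches at most two candidates for SwapApprovals and at most one for AddApprovals). Each affected candidate is described by a pair (initial type $T$, final type $T'$) in $2^{[n]} \times 2^{[n]}$, so there are at most $4^n$ such descriptors and thus at most $(4^n)^{2n}$ multisets of affected-candidate descriptors. Combined with the at most $2^n$ choices for $p$'s own final type, this yields $2^{O(n^2)}$ equivalence classes of bribery plans---an $\fpt(n)$ bound.

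\textbf{Step 3 (Verification).} For each enumerated plan I would verify feasibility---for AddApprovals this requires $T \subseteq T'$ for each affected candidate and total cost $\sum |T' \setminus T| \leq b$; for SwapApprovals it additionally requires per-voter balance (the number of approvals that $v$ loses equals the number it gains) so that a valid sequence of swaps realizes the profile change within budget. I would then construct the resulting election and invoke the $\fpt(n)$ winner-determination algorithm to test whether $p$ belongs to some winning committee, accepting on the first success. The total running time is $2^{O(n^2)} \cdot \fpt(n) = \fpt(n)$.

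\textbf{Main obstacle.} I expect the trickiest point to be Step~1 for SwapApprovals in the degenerate case where some voter approves the empty set: such a vote cannot be swapped into approving $p$, so $p$ cannot attain unanimity via swaps alone. I would handle this by applying the unanimity-based trivial-solution argument only to the $n' \leq n$ voters approving at least one candidate (giving $b < n'$ on useful swaps) and treating the empty-set voters as a fixed, symmetry-invariant part of the election carried through the enumeration unchanged.
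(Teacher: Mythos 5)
Your proof is correct and follows essentially the same route as the paper's: cap the budget at $n$ via unanimity, then use symmetry together with the $2^n$ bound on candidate types to shrink the search space to $\fpt$ size before invoking the $\fpt(n)$ winner-determination routine. The only cosmetic difference is that you enumerate abstract type-change profiles (plus a realizability check) whereas the paper enumerates concrete bribery actions on $n$ arbitrarily chosen representative candidates of each type; your remark about empty votes under \textsc{SwapApprovals} flags an edge case that the paper's proof silently glosses over as well.
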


\begin{proof}
  Consider an instance of our problem with $n$ voters, where $p$ is
  the preferred candidate.  If the budget is at least $n$, then we
  accept because we can ensure that every voter approves $p$, and $p$
  is selected for some winning committee by unanimity.  So we assume
  that the budget is less than $n$.  We (arbitrarily) select $n$
  candidates of each candidate type present in the election (or all
  candidates of a given type, if there are fewer than $n$ of them).
  These at most $n \cdot 2^n$ candidates are the only ones which we
  allow to add (for \textsc{$\calR$-AddApprovals-Bribery}) or to swap
  between (for \textsc{$\calR$-SwapApprovals-Bribery}).  We can now
  check all possibilities of adding or swapping these candidates, and
  we accept if at least one leads to $p$ belonging to a winning
  committee and is within the budget.
\end{proof}

A similar technique works for \textsc{\$SwapApprovals}, for the case
where we are allowed to move approvals to the preferred candidate
only.  However, this time we cannot arbitrarily choose $n$ candidates
of each type, because we might choose candidates for whom moving the
approvals is too expensive.

\begin{theorem}
  For each symmetric, unanimous rule $\calR$ with $\fpt(n)$ winner
  determination, \textsc{$\calR$-\$SwapApprovals-Bribery} (for the
  case where we are allowed to move approvals to the preferred
  candidate only) is in $\fpt$ when parameterized by the number of
  voters.
\end{theorem}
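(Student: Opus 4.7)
The plan is to refine the proof of the previous theorem, which kept $n$ arbitrary candidates of each type and enumerated swap plans over them. In the priced setting, arbitrary choices may miss cheaper swaps, so instead we must keep a larger (but still $n$-bounded) family of ``cheap-enough'' candidates per type. As a preliminary observation, since every swap moves an approval to $p$, each voter contributes at most one swap (once $v$ approves $p$ no further swap from $v$ is possible); hence any optimal bribery performs at most $n$ swaps in total.

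For each candidate type $A \subseteq V$ and each subset $G \subseteq A$, I would define $K_A^G$ to be the (up to) $n$ candidates of type $A$ with smallest value of $\sum_{v \in G} \pi(v, c, p)$, where $\pi(v, c, p)$ is the price of swapping $v$'s approval from $c$ to $p$. Let $K = \bigcup_{A, G} K_A^G$. A direct count gives $|K| \le n \cdot 3^n$, which is a function of $n$ alone, since $\sum_{A \subseteq V} 2^{|A|} = 3^n$.

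The key claim is that some optimal bribery uses only candidates from $K$. Fix an optimal bribery $S$ and, for each type $A$, let $C^S_A$ be the candidates of type $A$ that receive at least one swap in $S$; since voters swapping from different candidates are pairwise disjoint, $|C^S_A| \le |A| \le n$. Suppose some $c \in C^S_A$ lies outside $K_A^{G_c}$, where $G_c$ is the set of voters swapping from $c$. By the definition of $K_A^{G_c}$, there are at least $n$ candidates of type $A$ whose value of $\sum_{v \in G_c} \pi(v, \cdot, p)$ is no larger than that of $c$; at most $|C^S_A|-1 \le n-1$ of them already appear in $S$, so at least one such candidate $c'$ is unused. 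Replacing $c$ by $c'$ (routing the swaps in $G_c$ to $c'$ instead) does not increase the cost and, because $\calR$ is symmetric and $c, c'$ have identical approval sets, produces an isomorphic post-bribery election, so the bribery remains valid. Iterating yields an optimal bribery supported on $K$.

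Finally, I would enumerate all swap plans on $K$: for each voter $v$, choose either ``no swap'' or a candidate $c \in K$ approved by $v$ (so that $v$'s approval of $c$ is swapped to $p$). The number of choices per voter is at most $|K|+1$, giving at most $(|K|+1)^n = 2^{O(n^2)}$ plans, each of which automatically satisfies the one-swap-per-voter constraint. For every plan we verify in polynomial time that its total cost fits the budget, construct the post-bribery election, and invoke the $\fpt(n)$ winner-determination oracle to test whether $p$ belongs to some winning committee; we accept if any plan succeeds. The main technical obstacle is the exchange argument for the keep-set reduction, which relies on the symmetry of $\calR$ together with the bound $|C^S_A| \le n$; everything else is routine enumeration and bookkeeping.
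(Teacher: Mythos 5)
Your proposal is correct and follows essentially the same route as the paper: restrict attention to a keep-set consisting of, for each source type and each possible set of lost approvals (equivalently, each source/target type pair), the $n$ candidates that are cheapest to convert, then brute-force all swap plans over this $f(n)$-sized candidate set and test each with the $\fpt(n)$ winner-determination routine. The only difference is presentational: you make explicit the exchange argument (using symmetry of $\calR$ and the bound $|C^S_A|\le n$ to find an unused cheaper replacement) that the paper leaves implicit.
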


\begin{proof}
  Consider an instance of our problem with $n$ voters, where $p$ is
  the preferred candidate.  Since we can move approvals to $p$ only,
  it follows that we cannot operate twice on the same voter and the
  number of operations in every solution is at most $n$.  Further, in
  a solution we might change at most $n$ candidates from their
  original types to some other types.  Consider two types, $\sigma$
  and $\sigma'$, and note that, as a corollary to the above
  observation, we have that at most $n$ candidates of type $\sigma$
  might change to type $\sigma'$ in the solution.  Therefore, for each
  pair of types $\sigma$ and $\sigma'$, we select at most $n$
  candidates of type $\sigma$ which are the cheapest to change to type
  $\sigma'$. These at most $n \cdot 2^{2n}$ candidates are the only
  ones which we allow to operate on.  We can now check all possible
  sets of move-approval-to-$p$ bribery actions on these candidates,
  and we accept if at least one leads to $p$ belonging to a winning
  committee and is within the budget.
\end{proof}

For CCAV and GAV, we use
the fact that they 
operate
on candidate types.
Then we provide a general $\xp$ result.

\begin{theorem}
  If $\calR$ is \textsc{CCAV} or \textsc{GAV}, then
  \textsc{$\calR$-(\$)Add}\-\textsc{Approvals-Bribery} and
  \textsc{$\calR$-(\$)DeleteApprovals-Bri\-bery} are in $\fpt$ 
  (parameterized by the number of voters).
\end{theorem}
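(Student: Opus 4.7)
My plan exploits the \emph{type-invariance} of CCAV and GAV: both rules depend only on the multiset of candidates' approval sets (their types), of which at most $2^n$ occur in an election on $n$ voters. The idea is to reduce each bribery instance to an integer (or mixed-integer) linear program whose decision variables are the type-transition counts $f_{T,T'}$, with $T' \supseteq T$ for \textsc{AddApprovals} and $T' \subseteq T$ for \textsc{DeleteApprovals}; $f_{T,T'}$ counts the candidates originally of type $T$ whose final type is $T'$. This yields at most $4^n$ integer variables, a function of $n$ alone. Let $n_{T}' = \sum_{T''} f_{T'', T}$ denote the resulting number of final type-$T$ candidates. Conservation constraints $\sum_{T'} f_{T,T'} = m_T$, a linear budget (weighted by $|T \triangle T'|$ in the unit-price case), and the marker $f_{T_p,T_p'} \ge 1$ identifying $p$'s final type complete the skeleton.

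Expressing ``$p$ lies in some winning committee'' as linear constraints is the main step. For CCAV I would enumerate a target coverage $S^* \subseteq V$, a compatible final type $T_p'$ for $p$, and an at-most-$n$-element subfamily $\mathcal{C}$ of types whose union with $T_p'$ equals $S^*$ (at most $2^{O(n^2)}$ combinations, still a function of $n$). For each, I require $n_{T}' \ge 1$ for every $T \in \mathcal{C}$ and $\sum_{T \subseteq S^*} n_{T}' \ge k$, guaranteeing a $k$-committee containing $p$ with coverage $S^*$. To rule out any size-$k$ committee of coverage strictly greater than $|S^*|$, I would introduce auxiliary binary variables $b_T$ forced by $n_{T}' \le m(1-b_T)$ to satisfy $b_T = 1 \Rightarrow n_{T}' = 0$, and then for every $S'$ with $|S'| = |S^*|+1$ and every minimal cover $\mathcal{F}$ of $S'$ (with $|\mathcal{F}| \le n$, giving $2^{O(n^2)}$ such covers in total) add the constraint $\sum_{T \in \mathcal{F}} b_T \ge 1$. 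For GAV I would additionally enumerate the round $\ell$ in which $p$ is selected and the (at most $n+1$-long) sequence of useful types picked earlier, then encode per-round maximality of marginal coverage by forcing $n_{T'}' = 0$ for each type $T'$ whose marginal coverage would strictly beat the designated selection.

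Each resulting ILP has a function-of-$n$ count of variables and constraints, so Lenstra's algorithm~\cite{len:j:integer-fixed} disposes of it. For the priced variants I would layer on the MILP refinement of Bredereck et al.~\cite{BFNST15}, introducing real-valued auxiliary variables that encode, for each pair $(T,T')$ and each quantity $j$, the cheapest transition of $j$ type-$T$ candidates to type $T'$; this is sound because \textsc{Add} and \textsc{Delete} are candidate-independent operations, so the cheapest-$j$ cost is attained by picking the $j$ individually cheapest candidates---a piecewise-linear dependence that MILP absorbs with only polynomially many extra real variables. The hardest step is the no-better-committee constraint: keeping its size bounded by a function of $n$ forces the minimal-cover enumeration above. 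This candidate-independence is precisely what fails for \textsc{SwapApprovals} (two swaps on candidates of the same type may touch the same voter in conflicting ways), which is why the theorem does not extend to swaps.
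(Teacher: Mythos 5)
Your starting point---that CCAV and GAV depend only on which candidate types are present, of which there are at most $2^n$---is exactly the observation the paper's proof is built on, but your cost model for the priced variants is unsound. You charge a vector of transition counts $(f_{T,T'})_{T'}$ by summing, independently over destination types $T'$, the cost of the $f_{T,T'}$ individually cheapest type-$T$ candidates. A single candidate of type $T$ can, however, be the cheapest one to move to several different destinations, while it can be assigned only one final type. Concretely, if type $T$ contains candidates $c_1,c_2$ whose transition costs to each of $T'_1$ and $T'_2$ are $1$ and $100$ respectively, and a solution needs $f_{T,T'_1}=f_{T,T'_2}=1$, your objective evaluates to $2$ whereas the true minimum is $101$. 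The genuine cost of realizing given counts is the optimum of a transportation problem; your decomposition is only a lower bound on it, so the MILP can accept instances whose cheapest bribery exceeds the budget. (For unit prices the decomposition is exact, since every candidate of a given type costs the same to move, so that half of your argument survives.) The paper's proof avoids the issue by guessing only the \emph{set} of types present after the bribery---the counts are irrelevant to CCAV and GAV---and computing the cheapest realization of that set by a single min-cost/max-flow computation (source to each candidate with capacity $1$; candidate to each guessed type with the individual transition cost; each type to the sink with a lower-bound requirement of $1$ unit of flow), which solves the assignment correctly.

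A second weakness is that you encode winner determination inside the ILP, which is both unnecessary and incompletely specified. For CCAV, a size-$k$ committee containing $p$ with coverage $S^*$ can use at most $k-1$ types besides $p$'s, so your enumeration must restrict $|\mathcal{C}|\leq k-1$ rather than $|\mathcal{C}|\leq n$; as written, a guess with $k-1<|\mathcal{C}|\leq n$ can certify a committee that cannot actually be assembled from $k$ candidates. The regime $k\geq n$ also needs separate treatment (the paper simply accepts there), and for GAV your ``per-round maximality'' constraints ignore that the rule is defined with a fixed tie-breaking order over individual candidates, not types. All of this evaporates in the paper's approach: once the final set of types (and $p$'s type) is guessed and its cheapest realization computed, one just runs the $\fpt(n)$ winner-determination procedure on that type set, so no linear encoding of the winning condition is needed at all.
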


\begin{proof}
  Consider some election with $n$ voters, committee size $k$, and
  where $p$ is the preferred candidate.  The crucial observation is
  that both for CCAV and GAV, the set of winning committees is fully
  determined by the candidate types present in the election
  (irrespective of the number of candidates of each type). This holds
  because whenever a candidate of some type is included in the
  committee, then adding another candidate of the same type will not
  change the committee's score.  In consequence, we can think of a
  winning committee as of a set of (at most $k$) candidate types.
  Candidate $p$ belongs to some winning committee if and only if its
  type belongs to some winning committee.

  If $k \geq n$, then we accept because for each candidate type there
  is a winning committee that includes it (it is always possible to
  choose at most $n$ candidate types so that the maximum number of
  voters approve the committee, and then we can add further, arbitrary
  candidate types).
  
  For the case where $k < n$, we proceed as follows. First, we guess
  candidate types that should be present in the election after the
  bribery (we also guess the type that $p$ shall have). Second, we
  compute the lowest cost of obtaining an election where exactly these
  candidate types are present (see below for an algorithm). Finally,
  we check if this cost does not exceed the budget and if there is a
  winning committee that includes $p$'s type. If so, we accept, and
  otherwise we try different guesses (and reject if no guess leads to
  acceptance).

  To compute the cost of transforming an election to one with exactly
  the guessed candidate types, we solve the following instance of the
  min-cost/max-flow problem~\cite{ahu-mag-orl:b:flows}.  For each
  candidate $c_i$ from the election, we create a node $c_i$.  We have
  a source node $s$ and we connect $s$ to each $c_i$ with an arc of
  capacity $1$ and cost $0$.  For each type $\sigma_j$ that we guessed
  to appear in the post-bribery election, we create a node $\sigma_j$.
  We have a target node $t$ and we connect each $\sigma_j$ to $t$ with
  an arc of cost $0$, infinite capacity, and the requirement that at
  least $1$ unit of flow passes through this arc (this ensures that
  each of the guessed types actually appears in the election after the
  bribery). We connect each $c_i$ to each $\sigma_j$ by an arc with
  capacity $1$ and cost equal to the price of changing the type of
  $c_i$ to the type $\sigma_j$. (For \textsc{\$AddApprovals} and
  \textsc{\$DeleteApprovals} we can indeed compute these costs
  independently for each candidate and each candidate type, using
  infinite costs to model impossible transformations).  This network
  contains at most $O(m+2^n)$ nodes, where $m$ is the number of
  candidates and $n$ is the number of voters. Since there is a
  polynomial-time algorithm that solves the min-cost/max-flow
  problem~\cite{ahu-mag-orl:b:flows} (i.e., that finds a minimum-cost
  flow that satisfies all the arc requirements and moves a given
  number of units of flow, $m$ in our case, from the source to the
  sink), this algorithm computes the desired cost in $\fpt$ time with
  respect to the number of voters.
\end{proof}



\begin{theorem}
  For each rule and bribery problem studied in this paper, the problem
  is in $\xp$ both for the parameterization by the number of
  candidates and voters.
\end{theorem}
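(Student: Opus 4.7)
The plan is, for each parameterization, to enumerate a space of post-bribery configurations whose size is polynomial in the input with exponent depending only on the parameter, and then to solve each configuration in polynomial time. Classify voters by their \emph{approval-set type} (the set of candidates they approve; at most $2^m$ possible types) and candidates by their \emph{approver-set type} (the set of voters approving them; at most $2^n$ possible types). Since every rule we consider is symmetric in candidates and in voters, the set of winning committees of size $k$---and hence whether $p$ belongs to one---depends only on the multiset of candidate types realized in the post-bribery election, so it suffices to enumerate type distributions rather than fully labelled elections.

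For the parameterization by the number of candidates $m$, I would enumerate all count vectors of voter types that can arise after the bribery; there are at most $(n+1)^{2^m}$ of them, which is $\xp$ in $m$. For each target vector, the minimum bribery cost is obtained by a transportation problem that assigns each original voter $v$ to a target type $\sigma$ at cost equal to the cheapest way to transform $v$'s approval set into $\sigma$ using the allowed operation. For \textsc{(\$)AddApprovals} and \textsc{(\$)DeleteApprovals} this per-voter cost is simply a sum of operation prices over the added or deleted candidates; for \textsc{(\$)SwapApprovals} it is a minimum-weight bipartite matching between the candidates $v$ loses and those $v$ gains (with cost $+\infty$ if the two sides have different cardinalities). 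The transportation step itself is polynomial-time solvable, and with $m$ fixed we can check in polynomial time whether any of the $\binom{m}{k}$ committees containing $p$ is winning under the resulting election, for every rule in $\{$AV, SAV, GAV, RAV, CCAV, PAV$\}$.

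The parameterization by the number of voters $n$ is handled symmetrically: enumerate the at most $(m+1)^{2^n}$ count vectors over candidate types ($\xp$ in $n$), price each target by an analogous min-cost assignment between original candidates and target candidate types, and verify that the resulting election admits a winning committee containing $p$. Winner checking is polynomial for AV, SAV, GAV, and RAV, and is in $\fpt(n)$ for CCAV and PAV by Theorem~15 of Faliszewski et al.~\cite{fal-sko-sli-tal:c:top-k-counting}, so it is absorbed into the $\xp$ bound. The step I expect to be subtlest is pricing the \textsc{(\$)SwapApprovals} variant, where the interactions within a single vote that obstructed the earlier $\fpt(m)$ argument reappear; the key point is that, by enumerating the full target configuration first, these interactions are localized to a single voter (or, in the dual, a single candidate) and are resolved by the per-voter min-cost matching, so the global cost is additive across voters and the $\xp$ running time is preserved.
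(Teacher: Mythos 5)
The paper omits its own proof of this theorem, so I can only judge your argument on its merits and against the techniques visible elsewhere in the paper. Your configuration-enumeration scheme is the natural one and matches the paper's visible machinery (voter/candidate types, assignment/min-cost-flow pricing, brute-force or $\fpt(n)$ winner checking). For the parameterization by the number of candidates it is correct throughout: the per-voter transformation cost is well defined and independent across voters for all three operations (for swaps it is the bipartite matching between $v\setminus\sigma$ and $\sigma\setminus v$ you describe), the transportation step is polynomial, and the enumeration of $(n+1)^{2^m}$ voter-type count vectors is within $\xp(m)$. For the parameterization by the number of voters, the same scheme works for \textsc{(\$)AddApprovals} and \textsc{(\$)DeleteApprovals}, since there an operation touches exactly one candidate and the per-candidate transformation costs really are independent --- this is exactly the independence the paper itself invokes in its min-cost-flow argument for CCAV/GAV.

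The gap is in \textsc{(\$)SwapApprovals} parameterized by the number of voters. A swap in vote $v$ from $c$ to $d$ simultaneously removes $v$ from $c$'s approver set and adds $v$ to $d$'s, so in the candidate-type view a single operation changes the types of \emph{two} candidates at once, at a price depending on the pair $(c,d)$. Your ``min-cost assignment between original candidates and target candidate types'' prices each candidate's type change independently, which is not valid here: candidate $c$ cannot shed voter $v$ unless some other candidate $d$ in the same vote simultaneously absorbs $v$, and the cheapest partner for $c$ may be claimed by another candidate. Your parenthetical that the interaction is ``localized to a single candidate in the dual'' is false --- it is localized to a single \emph{voter}, which in the candidate-type decomposition spans all candidates, so after fixing the target type counts the pricing problem is a coupled assignment-plus-per-vote-matching problem, not a transportation problem. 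Note that this is precisely the coupling the paper flags when explaining why its $\fpt(m)$ technique fails for \textsc{\$SwapApprovals}, and it is precisely the case (priced, unrestricted swaps, parameter $n$) for which the table claims only $\xp(n)$ and no stronger theorem in the paper provides cover; so this case genuinely needs an argument, and your proposal does not supply one.
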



\section{Related Work}\label{sec:related}
The bribery family of problems was introduced by Faliszewski et
al.~\cite{fal-hem-hem:j:bribery}, but in that work the authors mostly
(but not only) 
focused on the case where after ``buying'' a vote it is
possible to change it arbitrarily. Bribery problems where each local
change in a vote is accounted for separately, were first studied by
Faliszewski et al.~\cite{fal-hem-hem-rot:j:llull} (for irrational
votes) and then by Faliszewski~\cite{fal:c:nonuniform-bribery} (in
particular, for the single-winner approval setting, by allowing
different costs of moving approvals between candidates) and by Elkind
et al.~\cite{elk-fal-sli:c:swap-bribery,elk-fal:c:shift-bribery} (for
the standard ordinal model, in the \textsc{Swap Bribery} problem by
assigning different costs for swapping adjacent candidates in a
preference order, and in the \textsc{Shift-Bribery} problem by
assigning different costs for shifting the preferred candidate forward
in preference orders). \textsc{Swap Bribery} and \textsc{Shift
  Bribery} were then studied by a number of authors, including Dorn
and Schlotter~\cite{dor-sch:j:parameterized-swap-bribery} and
Bredereck et
al.~\cite{bre-che-fal-nic-nie:j:parametrized-shift-bribery}.
Bredereck et al.~\cite{bre-fal-nie-tal:c:multiwinner-shift-bribery}
studied the complexity of \textsc{Shift Bribery} for multiwinner
elections (their paper is very close to ours).

Our work was inspired by that of Aziz et
al.~\cite{azi-gas-gud-mac-mat-wal:c:multiwinner-approval} on the
complexity of winner determination and strategic voting in
approval-based elections. In addition, Aziz et
al.~\cite{azi-bri-con-elk-fre-wal:j:justified-representation}
introduced the notion of justified representation and argued why rules
such as PAV, CCAV, and GAV should be very effective
for 
achieving proportional representation or, at least, diversity within
the committee.  For more details regarding AV, SAV, PAV, and RAV, 
we point the reader
to the work of Kilgour~\cite{kil-handbook}. PAV, RAV, CCAV, and GAV
were introduced in the 19th century by Thiele~\cite{Thie95a}
(GAV-style rules attracted attention after Lu and
Boutilier~\cite{bou-lu:c:chamberlin-courant} considered them in the
ordinal setting).  CCAV 
is a variant of the
Chamberlin--Courant rule~\cite{cha-cou:j:cc}, 
but for the approval
setting; 
studied, e.g., by Procaccia et
al.~\cite{pro-ros-zoh:j:proportional-representation} and Betzler et
al.~\cite{bet-sli-uhl:j:mon-cc}.

Other closely related papers include that of Meir et
al.~\cite{mei-pro-ros-zoh:j:multiwinner} (on the complexity of
manipulation and control for multiwinner rules) and those of Magrino
et al.~\cite{mag-riv-she-wag:c:stv-bribery} and
Xia~\cite{xia:margin-of-victory} (on using bribery to quantify chances
of election fraud; 
we also use
bribery for post-election analysis).

\section{Outlook}\label{sec:outlook}
We believe that our most important contribution is conceptual: We
propose to use bribery problems to measure how well each candidate
performed in an election. While many (yet, not all) of our results are negative,
we show an extensive set of FPT 
results. 
It is important to verify how efficiently can our problems be solved
in practice. 

\section*{Acknowledgments} 
Piotr Faliszewski was supported by the National Science Centre, Poland,
under project 2016/21/B/ST6/01509.  Piotr Skowron was supported by the
ERC grant 639945 (ACCORD) and by a Humboldt Research Fellowship for
Postdoctoral Researchers.  Nimrod Talmon was supported by a
postdoctoral fellowship from I-CORE ALGO.

\bibliographystyle{abbrv}
\bibliography{bib}

\end{document}